\documentclass[journal,onecolumn,11pt,draftcls]{IEEEtran}
\usepackage[T1]{fontenc}
\usepackage{amsmath,amssymb,bm}
\usepackage{cite}
\usepackage[nohyperlinks]{acronym}
\usepackage{graphicx,xcolor,tikz}
\usepackage{needspace}
\usetikzlibrary{arrows.meta,positioning,calc,decorations.pathreplacing}
\usepackage[hidelinks]{hyperref}
\allowdisplaybreaks[1]
\newcommand{\Set}[1]{\mathcal{#1}}
\newtheorem{theorem}{Theorem}
\newtheorem{proposition}{Proposition}
\newtheorem{lemma}{Lemma}
\newtheorem{corollary}{Corollary}

\newacro{ISAC}{integrated sensing and communications}
\newacro{OFDM}{orthogonal frequency division multiplexing}
\newacro{IDFT}{inverse discrete Fourier transform}
\newacro{CP}{cyclic prefix}
\newacro{ESL}{expected sidelobe level}
\newacroplural{ESL}[ESLs]{expected sidelobe levels}
\newacro{EISL}{expected integrated sidelobe level}
\newacro{QAM}{quadrature amplitude modulation}
\newacro{PSK}{phase shift keying}
\newacro{FFT}{fast Fourier transform}

\newacro{ISL}{integrated sidelobe level}
\newacro{PSL}{peak sidelobe level}
\newacro{IEEE}{Institute of Electrical and Electronics Engineers}
\newacro{IRE}{Institute of Radio Engineers}
\newacro{VTC}{Vehicular Technology Conference}
\newacro{6G}{sixth-generation mobile communications}

\begin{document}
\title{\huge On the Optimality of \acs*{OFDM} for Aperiodic Ranging}
\author{Yifeng Xiong, Fan Liu, Ya-Feng Liu, Shi Jin, and Jianhua Zhang}
\maketitle
\begin{abstract}
\begingroup
We study the expected aperiodic autocorrelation sidelobes of random communication symbols transmitted through a unitary modulation matrix. We consider proper sub-Gaussian constellations, namely those with fourth absolute moment \(\mu_4<2\) and zero pseudo-variance, including the commonly used uniform square \ac{QAM} and uniform \ac{PSK} constellations. {For every prescribed lag, we determine the exact minimum \ac{ESL} and a unitary modulation attaining it by applying a normalized \ac{IDFT} separately on each chain of sample indices spaced by that lag. The full \ac{OFDM} modulation attains this minimum if and only if the lag divides the block length \(N\).} Combining this criterion with the complete characterization of the minimizers at lag one, we show that a unitary modulation minimizes the \ac{ESL} simultaneously over the first \(r\) positive lags, called the prefix \(\{1,\ldots,r\}\) with \(1\le r<N\), if and only if every lag in that prefix divides \(N\). In particular, simultaneous optimality over all positive lags is impossible for \(N\ge3\). We also prove that \ac{OFDM} globally minimizes the partial \ac{EISL}, defined as the sum of the \acp{ESL} over a selected subset of positive lags, for every such prefix. This also resolves the full \ac{EISL} conjecture when \(r=N-1\).
\par\endgroup
\end{abstract}
\begin{IEEEkeywords}
\begingroup
Aperiodic autocorrelation, \acl*{ESL}, \acl*{EISL}, \acl*{OFDM}, unitary modulation.
\par\endgroup
\end{IEEEkeywords}

\acresetall
\section{Introduction}\label{sec:tit-introduction}

\subsection{Background and Research Questions}

Communication signals convey information to their intended receivers while their reflections reveal properties of the surrounding environment. \Ac{ISAC} uses a common transmitted waveform for both tasks, making the sensing behavior of communication signals an important part of waveform design \cite{liu2022isac}. As the data change across transmissions, a fixed modulation scheme generates a family of waveforms whose statistical properties affect the observations available to the sensing receiver \cite{liu2026communicationsignals}. The signaling distribution determines the tradeoffs between communication rate and estimation performance \cite{xiong2023tradeoff,ahmadipour2024joint}, providing a broader motivation for understanding how random information symbols should be mapped to a transmitted signal.

Communication waveforms must convey information while providing a signal structure that allows the receiver to distinguish and recover different symbols. Orthogonal modulation assigns the symbols to mutually orthogonal basis functions, providing a natural structure for symbol separation and a classical approach to communication waveform design \cite{chang1966orthogonal}. Different orthogonal bases produce different waveform structures, and the choice of basis also affects the correlation response and ranging performance when these waveforms are used for sensing. The modulation basis is therefore a natural design variable for studying the sensing performance of this class of communication waveforms.

Matched filtering is a common receiver processing method for ranging, in which the received echoes are correlated with the known transmitted waveform to identify target delays {\cite{levanon2004radar}}. The waveform autocorrelation is central to this processing, since a delayed and scaled replica of the transmitted signal produces a correspondingly shifted and scaled autocorrelation response, with its main peak aligned to the target delay and its sidelobes extending to other candidate delays. When several targets are present, their correlation responses superpose, and the sidelobes of a strong target may obscure the main peak of a weaker target at a nearby delay \cite{liu2025ofdm}. Reducing these sidelobes lowers the response contributed by the strong target at candidate delays where weaker targets may be located.

We are interested in sensing targets whose echo delays extend beyond the duration of the \ac{CP} used for communication. A \ac{CP} that covers the echo delays supports a circular shift representation of the useful block after \ac{CP} removal. Extending the sensing range beyond this protected interval leads to linear matched filtering over a longer observation interval \cite{liu2025ofdm}. We study the response of an isolated finite transmitted block, represented by \(N\) samples and zero outside that interval, with the receiver collecting the samples needed for the candidate delays. Linear correlation of this block gives the aperiodic autocorrelation, which contains \(N-k\) overlapping sample pairs at a positive lag \(1\le k<N\). The target delay locates the response in the observation interval, while the correlation lag measures an offset from its peak. At zero lag, the autocorrelation equals the block energy. Since unitary modulation preserves that energy, comparisons between modulation matrices for the same information symbols keep the autocorrelation peak height fixed.

The sidelobes vary with the communication data, and their squared magnitudes quantify the energy in each realization. Averaging these squared magnitudes over the data retains the contributions of realizations with different phases and defines the \ac{ESL} at each lag \cite{liu2025ofdm}. For a target with unit reflection gain, the \ac{ESL} at lag \(k\) is the expected squared magnitude of that target's contribution to the unnormalized matched filter output at an offset of \(k\) samples from its true delay. It therefore measures average sidelobe leakage at a prescribed relative delay.

For a prescribed interval of relative delays, the preferred design objective is to find a single modulation that simultaneously minimizes the \ac{ESL} at every lag in that interval. Such a modulation would provide the smallest expected sidelobe leakage at each possible target separation within the interval, while also minimizing every nonnegative weighted sum of the corresponding \acp{ESL}. Simultaneous optimality at individual lags therefore gives the most direct control of the average sidelobes throughout the region of interest.

When no single modulation attains these individual minima simultaneously, the partial \ac{EISL} provides a relaxed objective that allows tradeoffs among the lags. For a selected nonempty subset of positive lags, the partial \ac{EISL} is the sum of the \acp{ESL} over that subset, with equal weight assigned to each selected lag. In this paper, we focus on the first \(r\) positive lags, namely the consecutive set \(\{1,\ldots,r\}\), where \(1\le r<N\), which we call a positive lag prefix. This objective is particularly relevant when the detection range is less than the total duration of each transmission block. By linearity of expectation, the partial \ac{EISL} over this prefix equals the average total sidelobe energy in the corresponding relative delay window across transmitted blocks. Minimizing it allows a reduction at one lag to compensate for an increase at another. The prefix length \(r\) specifies the relative delay range of interest, and \(r=N-1\) gives the full \ac{EISL}. Since the autocorrelation has equal magnitudes at opposite lags, the symmetric window \(1\le |k|\le r\) has twice the expected sidelobe energy of its positive half and the same minimizing modulations.

This hierarchy of objectives leads to two related questions for unitary modulation with a fixed symbol distribution. We first ask whether a common modulation can attain the minimum \ac{ESL} at every lag in a prescribed interval, and which individual lags admit a given modulation as a global minimizer. For partial \ac{EISL} minimization, we ask which modulation minimizes this sum over a prescribed prefix and whether the same choice remains optimal for every prefix length. Answering these questions establishes both the extent of simultaneous optimality at individual lags and the best achievable total sidelobe energy over a relative delay window.

\subsection{Related Work and Remaining Questions}

Deterministic sequence design controls correlation sidelobes by choosing the transmitted samples {\cite{golomb2005signal}}. Complementary sequence pairs achieve cancellation through the sum of their aperiodic autocorrelations \cite{golay1961complementary}, while suitable polyphase constructions yield zero periodic autocorrelation at every nonzero cyclic lag \cite{chu1972polyphase}. Iterative methods extend the design to prescribed lag intervals and additional sequence constraints \cite{stoica2009unimodular,soltanalian2012correlation}, with surrogate minimization providing a direct approach to integrated sidelobe optimization \cite{song2015autocorrelation}. Weighted criteria emphasize selected delays \cite{song2016weighted}, joint objectives balance peak and integrated sidelobes \cite{alaee2017pslisl}, and ambiguity function design treats regions containing both delay and Doppler shifts \cite{aubry2013ambiguity}. For communication signals, the design object becomes the modulation rule that generates a family of transmitted sequences from random information symbols.

The statistics of these random signals can be shaped through several design variables. Dedicated precoding optimizes sensing estimation losses averaged over the data \cite{lu2024precoding}, probabilistic constellation shaping adjusts the symbol distribution and its fourth moment \cite{du2024shaping}, and pulse shaping controls average ambiguity sidelobe energy under bandwidth and intersymbol interference constraints \cite{liao2025pulse}. Analyses of random orthogonal waveforms also characterize the normalized sidelobe energy summed over the complete discrete periodic delay and Doppler domain, which is invariant across modulation bases \cite{zhang2026ambiguity}. For a fixed symbol distribution, the energy assigned to individual correlation lags provides the basis for a more detailed comparison of orthogonal modulations.

Within this modulation class, \ac{OFDM} corresponds to the normalized \ac{IDFT} matrix \cite{weinstein1971dft}, and its use for radar ranging has been studied through the processing of communication signals \cite{sturm2009ofdm,sturm2011fusion}. Liu et al.\ \cite{liu2025ofdm} compared the expected correlation sidelobes over all orthonormal modulation bases. Under their symbol assumptions, \ac{OFDM} simultaneously attains the minimum \ac{ESL} at every nonzero periodic lag, thereby achieving the preferred objective throughout the periodic delay range and also globally minimizing the periodic \ac{EISL}. For the full aperiodic \ac{EISL}, the same work establishes local optimality and formulates the conjecture that \ac{OFDM} is globally optimal \cite[Theorem~3 and Conjecture~1]{liu2025ofdm}. Subsequent periodic analysis extends optimality at each lag to signals with Nyquist pulse shaping and oversampled delays \cite{liu2025iceberg}.

\begingroup
These results make \ac{OFDM} a natural candidate for the aperiodic optimization problem. {We first ask how to choose a unitary modulation that globally minimizes the \ac{ESL} at a prescribed aperiodic lag, and for which lags \ac{OFDM} attains that minimum. We also ask whether a common unitary modulation can attain the individual minima simultaneously over a prescribed prefix.} We then ask whether \ac{OFDM} globally minimizes the \ac{EISL} for every positive lag prefix and which matrices attain each minimum. Together, these questions determine when the preferred simultaneous objective is achievable and give the minimum partial \ac{EISL} for each prefix when tradeoffs among lags are allowed.
\par\endgroup

A fourth moment identity provides the starting point for this comparison \cite{liu2025ofdm}. It expresses each \ac{ESL} as a term independent of the modulation matrix plus a deterministic column correlation energy multiplied by \(\mu_4-2\), where \(\mu_4\) is the common fourth absolute moment of the symbols normalized to unit second absolute moment. Under the independent zero mean symbol model with vanishing unconjugated second moments, \(\mu_4<2\) describes the sub-Gaussian constellation class in the fourth moment classification of \cite[Definition~1]{liu2025ofdm}. This class includes uniform square \ac{QAM} and uniform \ac{PSK} with at least three points. The negative coefficient \(\mu_4-2\) therefore turns minimization of the expected sidelobes into maximization of the corresponding column correlation energy. Periodic shifts share the Fourier basis as an eigenbasis, giving a common diagonal representation for that comparison. Truncating shifts at the boundaries of a finite block introduces contributions that vary with the lag, and the aperiodic bounds require their joint control across the selected lags and the orthonormal modulation columns.

\subsection{Contributions of This Paper}

\begingroup
We consider independent complex information symbols with zero mean, vanishing unconjugated second moment, unit second absolute moment, and a common fourth absolute moment \(\mu_4<2\), corresponding to normalized sub-Gaussian constellations. This class includes the commonly used uniform square \ac{QAM} constellations and uniform \ac{PSK} constellations with at least three points \cite{liu2025ofdm}. For a fixed symbol distribution and a deterministic unitary modulation matrix, our contributions are as follows.

\begingroup
\begin{enumerate}
\item \textbf{Single lag optimality of \ac{OFDM}.} For every \(1\le k<N\), we prove that \ac{OFDM}, represented by the normalized \ac{IDFT}, globally minimizes the aperiodic \ac{ESL} at lag \(k\) among all unitary modulations if and only if \(k\mid N\).

\item \textbf{Global optimality of \ac{OFDM} for prefix partial \ac{EISL}.} For every \(1\le r<N\), we prove that \ac{OFDM} globally minimizes the partial \ac{EISL} over the positive lag prefix \(\{1,\ldots,r\}\) and give the exact minimum. The full prefix \(r=N-1\) resolves the global optimality conjecture in \cite[Conjecture~1]{liu2025ofdm}, while the shorter prefixes establish optimality for every initial relative delay window.

\item \textbf{Characterization and construction of optimal modulations.} For every nonempty prefix, we prove that a unitary matrix attaining the minimum partial \ac{EISL} must be obtained from the normalized \ac{IDFT} by arbitrary subcarrier phases, a subcarrier permutation, and a common linear phase across the subcarriers. This identifies the complete prefix minimizing family, which is independent of the prefix length. For an arbitrary single lag, we construct a globally optimal unitary matrix by applying a normalized \ac{IDFT} separately on each chain of sample indices spaced by that lag, and derive the exact minimum \ac{ESL} and its improvement over the full \ac{IDFT}.
\end{enumerate}

Together, these results give the simultaneous optimality criterion in Corollary~\ref{cor:tit-simultaneous-prefix}: a common unitary modulation attains all individual \ac{ESL} minima over \(\{1,\ldots,r\}\) if and only if every lag in that prefix divides \(N\). In particular, for \(N\ge3\), no unitary modulation simultaneously attains the individual minima at all positive lags.
\par\endgroup

The remainder of this paper is organized as follows. Section~\ref{sec:tit-model} presents the signal model, sidelobe criteria, fourth moment reduction, main optimality theorem, and the criterion for simultaneous minimization over a prefix. Section~\ref{sec:tit-proof-strategy} develops the common correlation energy gap formulation, its boundary decomposition, and the coefficient conditions used to control it. Section~\ref{sec:tit-lag-optimality} applies these conditions to single lags and prefixes and states the characterization of all prefix minimizers. Section~\ref{sec:tit-discussion} discusses {the feasibility of simultaneous optimality, the optimal modulation for a prescribed lag, the cost of improving that lag, and the design freedom within the prefix minimizing family.} Section~\ref{sec:tit-conclusion} concludes the paper. The proofs of the stated lemmas, propositions, theorem, and corollary are collected in Appendices~\ref{app:tit-fourth-moment}--\ref{app:tit-prefix-equality}, with one proof in each appendix.
\par\endgroup

\subsection*{Notation}

Bold lowercase and uppercase letters denote vectors and matrices, respectively, while the superscripts \(\mathrm T\) and \(\mathrm H\) denote transpose and Hermitian transpose. Complex conjugation is written as \(\overline z\) or \(z^*\), and \(\mathrm i=\sqrt{-1}\) denotes the imaginary unit. We use \(\|\cdot\|_2\) for the Euclidean norm of a vector, \(\|\cdot\|_F\) for the Frobenius norm of a matrix, and \(\operatorname{tr}(\cdot)\) and \(\operatorname{rank}(\cdot)\) for trace and rank. The matrix \(\mathbf I_N\) is the \(N\times N\) identity, \(\delta_{mn}\) is the Kronecker delta, and \(\mathbb E\) denotes expectation, with its random arguments specified by the model.

\section{Signal Model, Preliminaries, and Main Result}\label{sec:tit-model}

\subsection{Random Symbols and Unitary Modulation}\label{subsec:tit-symbol-model}

For a fixed integer \(N\ge2\), let \(\mathbf s=[s_0,\ldots,s_{N-1}]^{\mathrm T}\in\mathbb C^N\) contain independent communication symbols whose moments satisfy \begin{equation} \mathbb E s_i=\mathbb E s_i^2=0,\qquad \mathbb E|s_i|^2=1,\qquad \mathbb E|s_i|^4=\mu_4<2. \label{eq:pw-symbol-moments} \end{equation} The vanishing pseudo-variance, namely the unconjugated second moment, makes each zero mean symbol proper in the standard complex statistical sense \cite{neeser1993proper,picinbono1996second}, while the common second and fourth absolute moments ensure that the comparison between bases uses the same symbol statistics. Identical distributions are not required when the independence and common moments in \eqref{eq:pw-symbol-moments} hold. Following the fourth moment classification in \cite[Definition~1]{liu2025ofdm}, we call constellations satisfying \eqref{eq:pw-symbol-moments} sub-Gaussian constellations, whose fourth absolute moment is below the value \(2\) of a zero mean proper complex Gaussian symbol with unit power. This class includes commonly used normalized uniform square \ac{QAM} and uniform \(M\)-ary \ac{PSK} constellations with \(M\ge3\).

An admissible modulation is a deterministic unitary matrix \(\mathbf U=[\mathbf u_0,\ldots,\mathbf u_{N-1}]\in\mathbb C^{N\times N}\), which maps the symbols to the transmitted sample block according to \begin{equation} \mathbf y=\mathbf U\mathbf s,\qquad \mathbf U^{\mathrm H}\mathbf U=\mathbf U\mathbf U^{\mathrm H}=\mathbf I_N. \label{eq:tit-unitary-modulation} \end{equation} Thus \(\mathbf u_n\) specifies the contribution of symbol \(s_n\) to the \(N\) time samples. The matrix may be designed for the block length, the selected delay window, and the known symbol statistics, but it is fixed before the current symbol realization is generated. Unitarity and \eqref{eq:pw-symbol-moments} give {\begin{subequations} \begin{align} \|\mathbf y\|_2^2&=\|\mathbf s\|_2^2,\qquad \mathbb E\|\mathbf y\|_2^2=N, \label{eq:tit-block-energy}\\ \mathbb E[\mathbf y\mathbf y^{\mathrm H}]&=\mathbf U\mathbb E[\mathbf s\mathbf s^{\mathrm H}]\mathbf U^{\mathrm H}=\mathbf I_N. \label{eq:tit-sample-covariance} \end{align} \end{subequations}} Consequently, every admissible modulation preserves the energy of each realized block and has the same average energy and sample covariance, so differences in the sidelobe criteria arise from the higher order structure of the transmitted block.

The \ac{OFDM} choice is the normalized \ac{IDFT} matrix \(\mathbf V\), whose entries are \begin{equation} v_{m,n}=N^{-1/2}e^{2\pi\mathrm i mn/N},\qquad 0\le m,n<N. \label{eq:pw-idft-coefficients} \end{equation} The transform can be evaluated efficiently using \ac{FFT} algorithms \cite{weinstein1971dft}. In this model, \(\mathbf V\mathbf s\) already contains the transmitted time samples, and no additional Fourier synthesis is included in \eqref{eq:tit-unitary-modulation}.

\subsection{Aperiodic Correlation and Sidelobe Criteria}\label{subsec:tit-sensing-criterion}

We evaluate the correlation of the finite block by extending its samples as \(y_m=0\) for \(m\notin\{0,\ldots,N-1\}\). This convention fixes the behavior at the block boundaries, because a shift changes the number of overlapping sample pairs and does not wrap samples around the block.

To relate this correlation to ranging, consider a reflector with a fixed integer delay \(\tau\), a nonzero complex gain \(\beta\), and zero Doppler, for which the received samples have the form \begin{equation} z_m=\beta y_{m-\tau}+w_m. \label{eq:tit-echo-model} \end{equation} The noise is zero mean, independent of the communication symbols, and satisfies \(\mathbb E[w_m\overline{w_n}]=\sigma_w^2\delta_{mn}\). The sensing receiver observes the samples required for the selected delays and has access to the realized block \(\mathbf y\) \cite{xiong2023tradeoff,liu2025ofdm}, so that it can use that block as its matched filter reference \cite{turin1960matched}. At a candidate delay \(\tau+k\), with \(1\le k<N\), the matched filter output is \begin{equation} \begin{aligned} \sum_{m=0}^{N-1}\overline{y_m}z_{m+\tau+k}&=\beta\sum_{m=0}^{N-k-1}\overline{y_m}y_{m+k}\\ &\quad+\sum_{m=0}^{N-1}\overline{y_m}w_{m+\tau+k}. \end{aligned} \label{eq:tit-matched-response} \end{equation} In particular, the signal contribution is proportional to the aperiodic autocorrelation at the relative delay \(k\). At zero relative delay, this contribution is instead \(\beta\|\mathbf y\|_2^2=\beta\|\mathbf s\|_2^2\), whose dependence on the symbol realization is the same for every unitary modulation by \eqref{eq:tit-block-energy}.

For a prescribed positive lag, define its \ac{ESL} by \begin{equation} \operatorname{\acs*{ESL}}_k(\mathbf U):=\mathbb E\left|\sum_{m=0}^{N-k-1}\overline{y_m}y_{m+k}\right|^2,\qquad 1\le k<N, \label{eq:tit-single-esl} \end{equation} where the expectation is over the communication symbols. Our preferred objective over a selected interval is to find one modulation that minimizes \eqref{eq:tit-single-esl} simultaneously at every lag in that interval.

To allow tradeoffs among selected lags, define the partial \ac{EISL} over a nonempty subset \(\mathcal K\) of positive lags by \begin{equation} \mathcal E_{\mathcal K}(\mathbf U):=\sum_{k\in\mathcal K}\operatorname{\acs*{ESL}}_k(\mathbf U),\qquad \varnothing\ne\mathcal K\subseteq\{1,\ldots,N-1\}. \label{eq:tit-subset-eisl} \end{equation} Thus the partial \ac{EISL} sums the \acp{ESL} over exactly the lags in \(\mathcal K\). The zero lag is excluded because it corresponds to the mainlobe.

For the positive lag prefix \(\mathcal K=\{1,\ldots,r\}\), we use the shorter notation \begin{equation} \mathcal E_r(\mathbf U):=\mathcal E_{\{1,\ldots,r\}}(\mathbf U)=\sum_{k=1}^{r}\operatorname{\acs*{ESL}}_k(\mathbf U),\qquad 1\le r<N, \label{eq:tit-partial-eisl} \end{equation} If the sample spacing is \(T_s\), this prefix describes relative delays from \(T_s\) to \(rT_s\). The negative delay correlation is the complex conjugate of its positive counterpart, so a symmetric window containing both signs has expected sidelobe energy \(2\mathcal E_r(\mathbf U)\) and the same minimizing matrices. The endpoint \(r=N-1\) gives the full aperiodic \ac{EISL} \cite[Sec.~IV]{liu2025ofdm}.

For \ac{OFDM} to attain the preferred simultaneous optimum over the entire positive lag range, its matrix \(\mathbf V\) would have to satisfy, for every unitary \(\mathbf U\) and every \(1\le k<N\), \begin{equation} \operatorname{\acs*{ESL}}_k(\mathbf V)\le\operatorname{\acs*{ESL}}_k(\mathbf U). \label{eq:tit-pointwise-esl-question} \end{equation} If \eqref{eq:tit-pointwise-esl-question} held, multiplying its individual inequalities by arbitrary weights \(a_k\ge0\) and summing would give \begin{equation} \sum_{k=1}^{N-1}a_k\operatorname{\acs*{ESL}}_k(\mathbf V)\le\sum_{k=1}^{N-1}a_k\operatorname{\acs*{ESL}}_k(\mathbf U). \label{eq:tit-weighted-esl-implication} \end{equation} Thus simultaneous \ac{ESL} optimality over the full positive lag range would minimize every nonnegative weighted sum, including each prefix \ac{EISL} in \eqref{eq:tit-partial-eisl}. More generally, the same implication holds on any prescribed interval when the individual inequalities hold throughout that interval, with weights set to zero elsewhere. The single lag part of Theorem~\ref{cor:pw-eisl} shows that \ac{OFDM} satisfies \eqref{eq:tit-pointwise-esl-question} exactly at the lags dividing \(N\), so the full range requirement fails for \(N\ge3\). The prefix result establishes optimality for the partial \ac{EISL} objective by comparing the contributions of the lags jointly.

Under the echo model \eqref{eq:tit-echo-model}, independence and the noise covariance give the more explicit relation \begin{equation} \sum_{k=1}^{r}\mathbb E\left|\sum_{m=0}^{N-1}\overline{y_m}z_{m+\tau+k}\right|^2=|\beta|^2\mathcal E_r(\mathbf U)+rN\sigma_w^2. \label{eq:tit-noisy-sidelobe-sum} \end{equation} The expectation in \eqref{eq:tit-noisy-sidelobe-sum} includes both the symbols and the noise. Since the second term is independent of the modulation matrix, this model leads to the waveform design problems
\begin{subequations} \begin{align} \min_{\mathbf U^{\mathrm H}\mathbf U=\mathbf I_N}\operatorname{\acs*{ESL}}_k(\mathbf U), \label{eq:tit-esl-optimization}\\ \min_{\mathbf U^{\mathrm H}\mathbf U=\mathbf I_N}\mathcal E_r(\mathbf U), \label{eq:tit-eisl-optimization} \end{align} \end{subequations}
where the lag \(k\) or the prefix length \(r\), as well as the symbol distributions, is held fixed during the respective optimization. The single lag problem \eqref{eq:tit-esl-optimization} specifies the minimum sought at each lag in the preferred simultaneous design, while \eqref{eq:tit-eisl-optimization} minimizes the total expected sidelobe energy when tradeoffs among those lags are allowed.

\subsection{Fourth Moment Reduction}\label{subsec:tit-moment-preliminaries}

To analyze \eqref{eq:tit-esl-optimization} and \eqref{eq:tit-eisl-optimization}, we separate the symbol statistics from the dependence on the modulation basis. For \(\mathbf c\in\mathbb C^N\), \(1\le k<N\), and a nonempty lag set \(\mathcal K\subseteq\{1,\ldots,N-1\}\), define \begin{equation} \gamma_k(\mathbf c)=\sum_{i=0}^{N-k-1}\overline{c_i}c_{i+k},\qquad S_{\mathcal K}(\mathbf U)=\sum_{n=0}^{N-1}\sum_{k\in\mathcal K}|\gamma_k(\mathbf u_n)|^2. \label{eq:pw-correlation-objective} \end{equation} The first quantity gives the correlation appearing in \eqref{eq:tit-single-esl} when \(\mathbf c=\mathbf y\), whereas the second is a deterministic statistic of the modulation columns. The following fourth moment reduction also appears in \cite[Corollary~4]{liu2025ofdm} for identically distributed symbols, and the calculation in Appendix~\ref{app:tit-fourth-moment} shows that independence and the common moments in \eqref{eq:pw-symbol-moments} are sufficient for the same identity.

\begin{lemma}[Fourth moment reduction]\label{lem:tit-fourth-moment} For the independent symbols and unitary modulation specified in Section~\ref{subsec:tit-symbol-model}, {\begin{subequations} \begin{align} \operatorname{\acs*{ESL}}_k(\mathbf U)=\mathbb E|\gamma_k(\mathbf y)|^2&=(N-k)+(\mu_4-2)S_{\{k\}}(\mathbf U), \label{eq:pw-eisl-single-reduction}\\ \mathcal E_r(\mathbf U)&=\sum_{k=1}^{r}(N-k)+(\mu_4-2)S_{\{1,\ldots,r\}}(\mathbf U). \label{eq:pw-eisl-prefix-reduction} \end{align} \end{subequations}} \end{lemma}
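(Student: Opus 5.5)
We expand the lag-\(k\) correlation of \(\mathbf y=\mathbf U\mathbf s\) in the symbol coordinates and take the fourth moment directly. Writing \(y_m=\sum_n u_{m,n}s_n\), we obtain
\begin{equation*}
\gamma_k(\mathbf y)=\sum_{a,b}\overline{s_a}s_b\,G_{ab},\qquad G_{ab}:=\sum_{i=0}^{N-k-1}\overline{u_{i,a}}u_{i+k,b}=\mathbf u_a^{\mathrm H}\mathbf J_k\mathbf u_b,
\end{equation*}
where \(\mathbf J_k\) is the shift matrix with ones at positions \((i,i+k)\). Thus \(\gamma_k(\mathbf y)=\mathbf s^{\mathrm H}\mathbf G\mathbf s\) with \(\mathbf G=\mathbf U^{\mathrm H}\mathbf J_k\mathbf U\), and the diagonal entries are \(G_{nn}=\gamma_k(\mathbf u_n)\). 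Then
\begin{equation*}
\mathbb E|\gamma_k(\mathbf y)|^2=\sum_{a,b,c,d}G_{ab}\overline{G_{cd}}\,\mathbb E[\overline{s_a}s_b s_c\overline{s_d}].
\end{equation*}

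Independence, zero mean and zero pseudo-variance make this moment vanish unless the indices pair up in a compatible way. The surviving cases are three.
\begin{itemize}
\item[(i)] If \(a=b\ne c=d\), the moment is \(1\).
\item[(ii)] If \(a=c\ne b=d\), the moment is \(1\).
\item[(iii)] If \(a=b=c=d\), the moment is \(\mu_4\).
\end{itemize}
The pairing \(a=d\ne b=c\) would involve \(\mathbb E[\overline{s_a}^2]\,\mathbb E[s_b^2]=0\), so it drops out. The only moments used are the common second and fourth ones, so identical distributions are not needed.

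Summing the three cases and completing each restricted sum to an unrestricted one gives
\begin{equation*}
\mathbb E|\gamma_k(\mathbf y)|^2=\Big|\sum_a G_{aa}\Big|^2+\sum_{a,b}|G_{ab}|^2+(\mu_4-2)\sum_n|G_{nn}|^2 .
\end{equation*}
The first term is \(|\operatorname{tr}\mathbf G|^2=|\operatorname{tr}\mathbf J_k|^2=0\), since \(\mathbf J_k\) has zero diagonal for \(k\ge1\). The second term is \(\|\mathbf U^{\mathrm H}\mathbf J_k\mathbf U\|_F^2=\|\mathbf J_k\|_F^2=N-k\) by unitary invariance. The third term equals \((\mu_4-2)S_{\{k\}}(\mathbf U)\), which proves \eqref{eq:pw-eisl-single-reduction}. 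Summing over \(k=1,\ldots,r\) and using linearity of expectation together with the definition of \(S_{\mathcal K}\) yields \eqref{eq:pw-eisl-prefix-reduction}.

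The main care point is the bookkeeping in the case analysis. One must ensure the coincident-index term \(a=b=c=d\) is counted exactly once, with the \(\mu_4\) coefficient replacing the two double-counted pairing contributions; this is what produces the factor \(\mu_4-2\). One must also justify that the conjugate-mismatched pairing vanishes, which is exactly where properness of the symbols is used.
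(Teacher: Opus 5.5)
Your proof is correct and follows essentially the same route as the paper's: the same case analysis of $\mathbb E[\overline{s_a}s_bs_c\overline{s_d}]$, completion of the restricted sums to $|\operatorname{tr}\mathbf G|^2+\|\mathbf G\|_F^2+(\mu_4-2)\sum_n|G_{nn}|^2$, and evaluation at $\mathbf G=\mathbf U^{\mathrm H}\mathbf J_k\mathbf U$. The paper also gives an explicit integrability check, $\mathbb E|\overline{s_a}s_bs_c\overline{s_d}|\le\mu_4$ by H\"older; you left this implicit when you factored expectations over independent symbols, and it is a minor point.
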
 \begin{IEEEproof} {Please refer to Appendix~\ref{app:tit-fourth-moment}.} \end{IEEEproof}

The coefficient \(\mu_4-2\) in Lemma~\ref{lem:tit-fourth-moment} is the fourth joint cumulant of a normalized proper symbol. To make this connection explicit, the fourth cumulant formula \cite{mendel1991tutorial} and \eqref{eq:pw-symbol-moments} give \begin{equation} \begin{aligned} \operatorname{cum}(s_i,\overline{s_i},s_i,\overline{s_i})&=\mathbb E|s_i|^4-2\bigl(\mathbb E|s_i|^2\bigr)^2-\bigl|\mathbb E s_i^2\bigr|^2\\ &=\mu_4-2, \end{aligned} \label{eq:tit-fourth-cumulant} \end{equation} where \(\operatorname{cum}\) denotes the joint cumulant. Thus \eqref{eq:pw-eisl-single-reduction} retains the fourth order contribution of the symbol distribution explicitly, and \eqref{eq:tit-fourth-cumulant} explains why a common covariance does not by itself determine the expected sidelobes.

In light of \eqref{eq:pw-eisl-single-reduction} and \eqref{eq:pw-eisl-prefix-reduction}, the terms independent of \(\mathbf U\) are fixed by the block length and the selected delays, while all dependence on the basis is contained in \(S_{\{k\}}(\mathbf U)\) and \(S_{\{1,\ldots,r\}}(\mathbf U)\), respectively. The condition \(\mu_4<2\) in \eqref{eq:pw-symbol-moments} therefore gives {\begin{subequations} \begin{align} \operatorname*{arg\,min}_{\mathbf U^{\mathrm H}\mathbf U=\mathbf I_N}\operatorname{\acs*{ESL}}_k(\mathbf U)&=\operatorname*{arg\,max}_{\mathbf U^{\mathrm H}\mathbf U=\mathbf I_N}S_{\{k\}}(\mathbf U), \label{eq:tit-equivalent-single-optimization}\\ \operatorname*{arg\,min}_{\mathbf U^{\mathrm H}\mathbf U=\mathbf I_N}\mathcal E_r(\mathbf U)&=\operatorname*{arg\,max}_{\mathbf U^{\mathrm H}\mathbf U=\mathbf I_N}S_{\{1,\ldots,r\}}(\mathbf U), \label{eq:tit-equivalent-optimization} \end{align} \end{subequations}} thereby reducing the expected sidelobe problems to deterministic correlation inequalities over the unitary group. The same expansion remains valid when \(\mu_4<2\) is replaced by \(\mu_4<\infty\) and the other symbol assumptions are retained, so it also explains the role of the restriction: if \(\mu_4=2\), the expected criteria are independent of \(\mathbf U\), while for \(\mu_4>2\) the two maximizations become minimizations.

\subsection{Main Result and Interpretation}\label{subsec:tit-main-result}

\begingroup
The next theorem gives the exact minimum \ac{ESL} at every single lag, a unitary matrix attaining it, and the exact minimum partial \ac{EISL} over every prefix together with all prefix minimizers. Its proof in Appendix~\ref{app:tit-main-theorem} combines the deterministic bounds and their equality conditions.
\par\endgroup

\begin{theorem}[{Optimal unitary modulation for single lags and prefixes}]\label{cor:pw-eisl} For the signal model \eqref{eq:tit-unitary-modulation} with independent symbols satisfying \eqref{eq:pw-symbol-moments}, the following statements hold.
\begin{enumerate}
\item { For a prescribed lag \(1\le k<N\), write \(N=qk+h\), where \(q\ge1\) and \(0\le h<k\). The exact minimum is

\begin{equation}
 \min_{\mathbf U^{\mathrm H}\mathbf U=\mathbf I_N}\operatorname{\acs*{ESL}}_k(\mathbf U)
 =(N-k)+(\mu_4-2)\left(N-2k+\frac{h}{q+1}+\frac{k-h}{q}\right).
 \label{eq:tit-optimal-single-esl}
\end{equation}

It is attained by the unitary matrix \(\mathbf U_{\rm blk}\) constructed in \eqref{eq:pw-block-competitor-matrix}, which applies a normalized \ac{IDFT} separately on each chain of sample indices spaced by \(k\). Its improvement over the full \ac{IDFT} \(\mathbf V\) is

\begin{equation}
 \operatorname{\acs*{ESL}}_k(\mathbf V)
 -\min_{\mathbf U^{\mathrm H}\mathbf U=\mathbf I_N}\operatorname{\acs*{ESL}}_k(\mathbf U)
 =(2-\mu_4)\frac{h(k-h)}{Nq(q+1)}.
 \label{eq:tit-single-esl-gain}
\end{equation}

Consequently, the full \ac{IDFT} minimizes the \ac{ESL} at lag \(k\) if and only if \(k\mid N\).}
\item Every unitary \(\mathbf U\) and every \(1\le r<N\) satisfy \begin{equation} \mathcal E_r(\mathbf U)\ge\sum_{k=1}^{r}(N-k)+\frac{\mu_4-2}{N}\sum_{k=1}^{r}(N-k)^2=\mathcal E_r(\mathbf V). \label{eq:tit-optimal-partial-eisl} \end{equation} For any fixed \(1\le r<N\), equality holds in \eqref{eq:tit-optimal-partial-eisl} if and only if \begin{equation} u_{m,n}=\frac{e^{\mathrm i(\alpha_n+m\theta)}}{\sqrt N}e^{2\pi\mathrm i m\sigma(n)/N},\qquad 0\le m,n<N, \label{eq:tit-optimal-modulation-family} \end{equation} where \(\theta,\alpha_n\in\mathbb R\) and \(\sigma\) is a permutation of \(\{0,\ldots,N-1\}\).
\end{enumerate} \end{theorem}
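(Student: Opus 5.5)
By Lemma~\ref{lem:tit-fourth-moment} and \eqref{eq:tit-equivalent-single-optimization}--\eqref{eq:tit-equivalent-optimization}, both parts reduce to deterministic statements about the column correlation energy. Since \(\mu_4-2<0\), it suffices to show the following:
\begin{equation*}
\max_{\mathbf U}S_{\{k\}}(\mathbf U)=N-2k+\frac{h}{q+1}+\frac{k-h}{q},\qquad
\max_{\mathbf U}S_{\{1,\ldots,r\}}(\mathbf U)=\frac1N\sum_{k=1}^{r}(N-k)^2 .
\end{equation*}
We also need the equality conditions for the second maximum. I write \(\gamma_k(\mathbf u)=\mathbf u^{\mathrm H}\mathbf J_k\mathbf u\), where \(\mathbf J_k\) has ones at the positions \((i,i+k)\).

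\textbf{Part 1.} The residues modulo \(k\) split \(\{0,\ldots,N-1\}\) into \(k\) chains \(\{j,j+k,j+2k,\ldots\}\). There are \(h\) chains of length \(q+1\) and \(k-h\) chains of length \(q\). The matrix \(\mathbf J_k\) is the direct sum of the nilpotent unit shifts \(\mathbf J^{(L)}\) on these chains. Note that
\[
N-2k+\frac{h}{q+1}+\frac{k-h}{q}=\sum_{\text{chains}}\frac{(L-1)^2}{L}.
\]

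\emph{Achievability.} Take \(\mathbf U_{\rm blk}\) as the direct sum of the length-\(L\) normalized IDFTs on the chains. Each of its \(L\) columns supported on a chain of length \(L\) has \(|\gamma_k|=(L-1)/L\), so that chain contributes \((L-1)^2/L\).

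\emph{Gain formula.} For the full IDFT, every column has \(|\gamma_k|=(N-k)/N\), so \(S_{\{k\}}(\mathbf V)=(N-k)^2/N\). Subtracting the achievable value and simplifying with \(N=qk+h\) gives \(h(k-h)/(Nq(q+1))\). This yields \eqref{eq:tit-single-esl-gain}, and the gain vanishes if and only if \(h=0\), i.e., \(k\mid N\).

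\emph{Converse.} This is the step I expect to be the main obstacle. A naive Frobenius bound only gives \(\sum_n|\gamma_k(\mathbf u_n)|^2\le\|\mathbf J_k\|_F^2=N-k\), which is too weak. I would lift to the vectors \(\mathbf w_n=\mathbf u_n\otimes\overline{\mathbf u_n}\). These are orthonormal, lie in the real span of vectorized Hermitian matrices, and satisfy
\[
|\gamma_k(\mathbf u_n)|^2=\mathbf w_n^{\mathrm H}\,\mathbf M\,\mathbf w_n,\qquad \mathbf M=\mathrm{Herm}(\mathbf J_k\otimes\overline{\mathbf J_k}).
\]
I would then bound \(\sum_n\mathbf w_n^{\mathrm H}\mathbf M\mathbf w_n\) by a Ky Fan type argument. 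The extra constraint to exploit is \(\sum_n\mathbf w_n=\operatorname{vec}(\mathbf I_N)\), which follows from unitarity.

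If that proves unwieldy, I would instead take a Lagrangian or dual certificate approach. The goal is to exhibit a Hermitian \(\mathbf Z\) with \(\operatorname{tr}\mathbf Z\) equal to the target such that \(|\mathbf u^{\mathrm H}\mathbf J_k\mathbf u|^2\le\mathbf u^{\mathrm H}\mathbf Z\mathbf u\) for every unit vector \(\mathbf u\); summing over the columns then gives the bound. Because \(\mathbf J_k\) is block diagonal over chains, I would look for \(\mathbf Z\) built chain by chain. On a chain of length \(L\) it would be a weighted combination of the circulant completion \(\mathbf C^{(L)}\) of the shift and the boundary correction \(\mathbf C^{(L)}-\mathbf J^{(L)}\). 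The certificate must be tight exactly on the chain IDFT columns, with value \((L-1)^2/L\) per chain.

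\textbf{Part 2.} The value \(\mathcal E_r(\mathbf V)\) follows from \(|\gamma_k(\mathbf v_n)|=(N-k)/N\) for every column. For the lower bound I would use the same lifting or certificate idea, now applied to \(\sum_{k\le r}|\mathbf u^{\mathrm H}\mathbf J_k\mathbf u|^2\). Here the natural certificate is built from the full-length circulants \(\mathbf C^{k}\), which are simultaneously diagonalized by \(\mathbf V\) for every \(k\). The point of summing over a prefix is that the boundary corrections of consecutive lags telescope. This is why the prefix objective admits a common optimizer although the individual lags do not.

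\textbf{Equality in Part 2.} Equality in the certificate inequality must hold at every column. Using lag \(1\) alone, this should force each \(\mathbf u_n\) to be an eigenvector of the circulant shift up to a common twist, i.e., \(u_{m,n}\propto e^{\mathrm i m\theta}e^{2\pi\mathrm i m\sigma(n)/N}\) with a per-column phase \(\alpha_n\). Unitarity then forces \(\sigma\) to be a permutation. Conversely, I would check directly that every matrix of the form \eqref{eq:tit-optimal-modulation-family} has \(|\gamma_k(\mathbf u_n)|=(N-k)/N\) for all \(k\) and \(n\), so it attains the bound for every prefix.
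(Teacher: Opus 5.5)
\textbf{Verdict: there is a genuine gap.} Several pieces are correct: your reduction, the construction and value of $\mathbf U_{\rm blk}$, the gain \eqref{eq:tit-single-esl-gain}, the value $\mathcal E_r(\mathbf V)$, and the check that every matrix of the form \eqref{eq:tit-optimal-modulation-family} attains the bound. What is missing is the converse bound in both parts and the necessity half of the equality characterization. These carry essentially all of the difficulty, but you only announce them, and neither suggested route works as described.

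\textbf{The Ky Fan route.} By itself it is not even competitive with the Frobenius bound you discarded. For $k=1$ and $N=10$, the ten largest eigenvalues of $\mathrm{Herm}(\mathbf J_1\otimes\overline{\mathbf J_1})$ sum to about $9.26$. That exceeds $N-1=9$, while the target is $8.1$. Everything would therefore rest on an unspecified use of $\sum_n\mathbf w_n=\operatorname{vec}(\mathbf I_N)$.

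\textbf{The certificate route.} The idea is sound in principle; it is in fact what the paper's argument amounts to. The shapes you propose, however, cannot be valid.
\begin{enumerate}
\item A valid $\mathbf Z$ must be tight on every column of every optimal matrix.
\item Twisting a chain (in Part 2, the whole block) by a linear phase $e^{\mathrm i t\theta}$ preserves optimality. So $\mathbf Z$ must be tight on all linear-phase unit vectors.
\item This forces every off-diagonal sum $\sum_{m'-m=d}Z_{m,m'}$, $d\neq0$, to vanish. A Hermitian matrix built linearly from circulants and from $\mathbf C^{(L)}-\mathbf J^{(L)}$ and its adjoint is then a multiple of $\mathbf I$.
\item No multiple of $\mathbf I$ with the required trace is valid once $L\ge3$. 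The numerical radius of the $L\times L$ nilpotent shift is $\cos(\pi/(L+1))$, which exceeds $(L-1)/L$.
\end{enumerate}
In Part 2 this obstruction already applies at $r=1$ for $N\ge3$.

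\textbf{The missing idea.} You need a diagonal certificate that penalizes boundary entries, proved through the magnitude/phase split.

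\emph{Single lag.} On a chain of length $m$, the paper shows
$|\gamma_1(\mathbf z)|^2\le\|\mathbf z\|_2^2\bigl[(1-m^{-2})\|\mathbf z\|_2^2-(1-m^{-1})(|z_0|^2+|z_{m-1}|^2)\bigr]$.
It relaxes the phases and applies the identity of Lemma~\ref{lem:pw-gap}. It then writes $b_1=p_0+p_{m-1}-2$ as a telescoping combination of $p_i-p_{i+1}$ with coefficients of modulus at most $1-2/m$, so $b_1^2\le V_1^2\le E_1H_1$. Cauchy--Schwarz across chains and row normalization then give $\sum_j(m_j-1)^2/m_j$.

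\emph{Prefix.} The certificate is $\operatorname{diag}(z_i)$ with $z_i=N^{-2}\sum_{k\le r}(N-k)^2-N^{-1}\sum_{k\le r}(N-k)\eta_{k,i}$. Its trace equals the target because $\sum_i\eta_{k,i}=0$. Its validity is equivalent to $\sum_{k\le r}b_k^2\le\sum_{k\le r}E_kH_k$ for every nonnegative profile.

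That inequality is the genuinely hard step, and your remark that boundary corrections telescope does not address it. The paper expands each $b_\ell$ in allowed differences by rerouting the forbidden long edges of \eqref{eq:pw-pair-identity} along paths of bounded congestion (Lemma~\ref{lem:pw-interval}). Lemma~\ref{lem:pw-contraction} then gives a doubly substochastic bound with $\rho=\max\{r,N-2\}/N<1$, which combines with $V_k^2\le E_kH_k$.

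\emph{Equality.} The strict inequality $\rho<1$ is also what your equality sketch lacks.
\begin{enumerate}
\item It forces $E_1^{(n)}=0$, i.e.\ constant magnitudes in every column.
\item Equality in the triangle inequality at lag one then gives each column its own linear phase $\theta_n$.
\item Only orthogonality between columns yields the common $\theta$ and the permutation $\sigma$.
\end{enumerate}
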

\begin{IEEEproof} {Please refer to Appendix~\ref{app:tit-main-theorem}.} \end{IEEEproof}

\begingroup
For any matrix \(\mathbf U\) of the form \eqref{eq:tit-optimal-modulation-family}, direct substitution into \eqref{eq:pw-correlation-objective} gives
{\begin{subequations}
\begin{align}
 \gamma_k(\mathbf u_n)
 &=\frac{N-k}{N}e^{\mathrm i k(\theta+2\pi\sigma(n)/N)},
 \label{eq:tit-family-correlations}\\
 S_{\{k\}}(\mathbf U)&=\frac{(N-k)^2}{N}=S_{\{k\}}(\mathbf V),
 \qquad 1\le k<N.
 \label{eq:tit-family-single-energy}
\end{align}
\end{subequations}}
Thus the phase and permutation operations in \eqref{eq:tit-optimal-modulation-family} preserve every single lag correlation energy. Together with Theorem~\ref{cor:pw-eisl}, these identities determine when the preferred simultaneous objective is achievable by any unitary modulation.

\begin{corollary}[Simultaneous optimality over a prefix]\label{cor:tit-simultaneous-prefix}
Under the assumptions of Theorem~\ref{cor:pw-eisl}, fix \(1\le r<N\). There exists a unitary matrix \(\mathbf U_*\) satisfying
\begin{equation}
 \operatorname{\acs*{ESL}}_k(\mathbf U_*)
 =\min_{\mathbf U^{\mathrm H}\mathbf U=\mathbf I_N}
    \operatorname{\acs*{ESL}}_k(\mathbf U),\qquad 1\le k\le r,
 \label{eq:tit-simultaneous-prefix}
\end{equation}
if and only if
\begin{equation}
 k\mid N\qquad\text{for every }k\in\{1,\ldots,r\}.
 \label{eq:tit-simultaneous-divisibility}
\end{equation}
When \(N\ge3\), no unitary matrix simultaneously minimizes the \ac{ESL} at all positive lags.
\end{corollary}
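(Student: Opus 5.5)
The plan is to use Theorem~\ref{cor:pw-eisl} in both directions, together with the lag-one case of its prefix part.

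\emph{Sufficiency.} Assume \eqref{eq:tit-simultaneous-divisibility} holds and take \(\mathbf U_*=\mathbf V\). For each \(1\le k\le r\) we have \(k\mid N\). The single lag part of Theorem~\ref{cor:pw-eisl} then says that the full \ac{IDFT} minimizes \(\operatorname{ESL}_k\): in \eqref{eq:tit-single-esl-gain}, \(h=0\), so the gap vanishes. Hence \eqref{eq:tit-simultaneous-prefix} holds for \(\mathbf V\).

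\emph{Necessity.} Suppose \(\mathbf U_*\) satisfies \eqref{eq:tit-simultaneous-prefix}. Since \(1\mid N\) always, this direction has to use the equality characterization at lag one.

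\begin{itemize}
\item Apply the prefix part of Theorem~\ref{cor:pw-eisl} with \(r=1\). Here \(\mathcal E_1=\operatorname{ESL}_1\), so the minimizers of the lag-one \ac{ESL} are exactly the family \eqref{eq:tit-optimal-modulation-family}. In particular, \(\mathbf U_*\) must have this form.
\item By \eqref{eq:tit-family-single-energy} and Lemma~\ref{lem:tit-fourth-moment}, every member of the family satisfies \(\operatorname{ESL}_k(\mathbf U_*)=\operatorname{ESL}_k(\mathbf V)\) for all \(k\).
\item Minimality of \(\mathbf U_*\) at lag \(k\le r\) therefore forces \(\operatorname{ESL}_k(\mathbf V)\) to equal the minimum. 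By \eqref{eq:tit-single-esl-gain}, this means \((2-\mu_4)h(k-h)=0\).
\item Since \(\mu_4<2\) and \(0\le h<k\), this gives \(h=0\), that is, \(k\mid N\).
\end{itemize}

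\emph{Final claim.} Take \(r=N-1\) with \(N\ge3\). The lag \(k=N-1\ge2\) does not divide \(N\), because \(N=1\cdot(N-1)+1\) with remainder \(1\ne0\). The equivalence just proved then rules out any simultaneous minimizer over all positive lags.

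The only delicate step is recognizing that the lag-one minimizers are exactly the family \eqref{eq:tit-optimal-modulation-family}, which we obtain by specializing the prefix characterization to \(r=1\). Everything else is the arithmetic of the gap formula.
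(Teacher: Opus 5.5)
Your proposal is correct and follows the paper's proof essentially step for step. Sufficiency takes $\mathbf V$ and applies the single lag part of Theorem~\ref{cor:pw-eisl}, while necessity specializes the prefix equality characterization to $r=1$ (where $\mathcal E_1=\operatorname{ESL}_1$), transfers the ESL values via \eqref{eq:tit-family-single-energy} and Lemma~\ref{lem:tit-fourth-moment}, applies the divisibility criterion from \eqref{eq:tit-single-esl-gain}, and obtains the final claim from $(N-1)\nmid N$ for $N\ge3$.
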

\begin{IEEEproof} {Please refer to Appendix~\ref{app:tit-simultaneous-prefix}.} \end{IEEEproof}

Corollary~\ref{cor:tit-simultaneous-prefix} identifies the prefixes on which the individual minima can be attained by a common modulation. For all other prefixes, Theorem~\ref{cor:pw-eisl} gives the exact minimum partial \ac{EISL} \(\mathcal E_r\) and every matrix attaining it. The family in \eqref{eq:tit-optimal-modulation-family} is independent of \(r\), so selecting a different prefix preserves the full set of \ac{EISL} minimizers.
\par\endgroup

Taking \(r=N-1\) in \eqref{eq:tit-optimal-partial-eisl} gives \begin{equation} \min_{\mathbf U^{\mathrm H}\mathbf U=\mathbf I_N}\mathcal E_{N-1}(\mathbf U)=\frac{N(N-1)}2+(\mu_4-2)\frac{(N-1)(2N-1)}6. \label{eq:tit-full-eisl-value} \end{equation} This establishes the global aperiodic \ac{EISL} optimality conjectured in \cite[Conjecture~1]{liu2025ofdm}, and \eqref{eq:tit-optimal-partial-eisl} further shows that the same modulation remains optimal when only the first \(r\) positive delays are retained. For constant modulus symbols satisfying \eqref{eq:pw-symbol-moments}, \(\mu_4=1\), so the minimum has the particularly simple form \begin{equation} \mathcal E_r(\mathbf V)=\frac1N\sum_{k=1}^{r}k(N-k). \label{eq:tit-constant-modulus-eisl} \end{equation} The positive value in \eqref{eq:tit-constant-modulus-eisl} reflects the finite block boundaries even when every data symbol has the same magnitude.

\begingroup
\section{Correlation Energy Gaps and Coefficient Conditions}\label{sec:tit-proof-strategy}

\subsection{\acs*{IDFT} Benchmark and Correlation Energy Gap}\label{subsec:tit-idft-gap}

The fourth moment reductions \eqref{eq:pw-eisl-single-reduction} and \eqref{eq:pw-eisl-prefix-reduction} reduce Theorem~\ref{cor:pw-eisl} to upper bounds on the column correlation energies in \eqref{eq:pw-correlation-objective}. We first compute their values for the normalized \ac{IDFT} \(\mathbf V=[\mathbf v_0,\ldots,\mathbf v_{N-1}]\). Substituting \eqref{eq:pw-idft-coefficients} into \eqref{eq:pw-correlation-objective} gives
\begin{subequations}
\begin{align}
 \gamma_k(\mathbf v_n)&=\sum_{m=0}^{N-k-1}\frac1N e^{2\pi\mathrm i kn/N}
 =\frac{N-k}{N}e^{2\pi\mathrm i kn/N},
 \label{eq:tit-idft-correlation}\\
 S_{\mathcal K}(\mathbf V)&=N\sum_{k\in\mathcal K}\frac{(N-k)^2}{N^2}
 =\frac1N\sum_{k\in\mathcal K}(N-k)^2.
 \label{eq:tit-idft-benchmark}
\end{align}
\end{subequations}
For a nonempty lag set \(\mathcal K\subseteq\{1,\ldots,N-1\}\), define the correlation energy gap by
\begin{equation}
 B_{\mathcal K}(\mathbf U):=S_{\mathcal K}(\mathbf V)-S_{\mathcal K}(\mathbf U)
 =\frac1N\sum_{k\in\mathcal K}(N-k)^2-S_{\mathcal K}(\mathbf U).
 \label{eq:pw-deficit-definition}
\end{equation}
In particular, subtracting the fourth moment identities for \(\mathbf U\) and \(\mathbf V\) gives
{\begin{subequations}
\begin{align}
 \operatorname{\acs*{ESL}}_k(\mathbf U)-\operatorname{\acs*{ESL}}_k(\mathbf V)
 &=(2-\mu_4)B_{\{k\}}(\mathbf U),
 \label{eq:tit-single-gap-equivalence}\\
 \mathcal E_r(\mathbf U)-\mathcal E_r(\mathbf V)
 &=(2-\mu_4)B_{\{1,\ldots,r\}}(\mathbf U).
 \label{eq:tit-prefix-gap-equivalence}
\end{align}
\end{subequations}}
Since \(2-\mu_4>0\) by \eqref{eq:pw-symbol-moments}, the sign of each gap directly determines whether the candidate modulation improves on \ac{OFDM}. The proof of Theorem~\ref{cor:pw-eisl} therefore has three tasks:
\begin{enumerate}
\item {For every single lag \(k\), determine the exact maximum of \(S_{\{k\}}(\mathbf U)\), construct a maximizing unitary matrix, and compare its value with the full \ac{IDFT} benchmark \eqref{eq:tit-idft-benchmark}.}
\item For every \(1\le r<N\), establish \(B_{\{1,\ldots,r\}}(\mathbf U)\ge0\) for every unitary \(\mathbf U\).
\item For each such prefix, prove that a zero gap forces \(\mathbf U\) to have the form \eqref{eq:tit-optimal-modulation-family}, and verify that every matrix of this form attains a zero gap.
\end{enumerate}
Sections~\ref{subsec:pw-deficit}--\ref{subsec:tit-coefficient-solvability} develop the gap decomposition and the coefficient conditions shared by these tasks. Section~\ref{sec:tit-lag-optimality} then treats single lags and prefixes and states the equality characterization. Appendix~\ref{app:tit-main-theorem} combines the resulting bounds and equality conditions to prove Theorem~\ref{cor:pw-eisl}.
\par\endgroup

\subsection{{Gap Decomposition and Boundary Power Imbalances}}\label{subsec:pw-deficit}

\begingroup
Section~\ref{subsec:tit-idft-gap} reduces the required sidelobe bounds to nonnegativity of the correlation energy gap \(B_{\mathcal K}(\mathbf U)\) in \eqref{eq:pw-deficit-definition}. We now decompose this gap into a phase loss and a contribution from the magnitudes of the modulation entries. By \eqref{eq:pw-deficit-definition},
\par\endgroup
\begin{equation}
 N^2B_{\mathcal K}(\mathbf U)
 =\sum_{n=0}^{N-1}\sum_{k\in\mathcal K}
       \left[(N-k)^2-N^2|\gamma_k(\mathbf u_n)|^2\right].
 \label{eq:pw-deficit-entrywise}
\end{equation}
We first examine each summand in \eqref{eq:pw-deficit-entrywise}. Because these summands can be negative, we will use the sums over columns and lags to obtain the desired bound.

For every column \(\mathbf u_n\), denote its normalized magnitudes and powers by
\begin{subequations}
\begin{align}
 x_i^{(n)}&=\sqrt N|u_{i,n}|,\qquad
 p_i^{(n)}=(x_i^{(n)})^2=N|u_{i,n}|^2,\qquad 0\le i,n<N,
 \label{eq:pw-normalized-profile}\\
 \sum_{i=0}^{N-1}p_i^{(n)}
 &=N\sum_{i=0}^{N-1}|u_{i,n}|^2=N.
 \label{eq:pw-profile-normalization}
\end{align}
\end{subequations}
Equal coefficient magnitudes correspond to \(x_i^{(n)}=p_i^{(n)}=1\).
Since \(\mathbf x^{(n)}\) is real and nonnegative, the correlation
in \eqref{eq:pw-correlation-objective} satisfies
\begin{equation}
 \gamma_k(\mathbf x^{(n)})
 =\sum_{i=0}^{N-k-1}x_i^{(n)}x_{i+k}^{(n)}
 =N\sum_{i=0}^{N-k-1}|u_{i,n}||u_{i+k,n}|\ge0.
 \label{eq:pw-magnitude-correlation}
\end{equation}
The triangle inequality therefore gives
\begin{equation}
 N|\gamma_k(\mathbf u_n)|\le\gamma_k(\mathbf x^{(n)}),
 \qquad
 \gamma_k(\mathbf x^{(n)})^2-N^2|\gamma_k(\mathbf u_n)|^2\ge0.
 \label{eq:pw-phase-relaxation}
\end{equation}
Using \eqref{eq:pw-phase-relaxation}, we may decompose the summand of
\eqref{eq:pw-deficit-entrywise} as
\begin{equation}
 \begin{split}
 (N-k)^2-N^2|\gamma_k(\mathbf u_n)|^2
 ={}&\left[(N-k)^2-\gamma_k(\mathbf x^{(n)})^2\right]\\
 &+\left[\gamma_k(\mathbf x^{(n)})^2
                         -N^2|\gamma_k(\mathbf u_n)|^2\right].
 \end{split}
 \label{eq:pw-phase-magnitude-split}
\end{equation}
The second bracket in \eqref{eq:pw-phase-magnitude-split} is the
nonnegative loss caused by phase cancellation.  The first bracket
depends only on the magnitudes, and we will express it through
differences between entries separated by \(k\) lags.

Next, for a general nonnegative vector \(\mathbf x\) with
\(\sum_{i=0}^{N-1}x_i^2=N\), write \(p_i=x_i^2\) and define
{\begin{subequations}
\begin{align}
 E_k(\mathbf x)&=\sum_{i=0}^{N-k-1}(x_i-x_{i+k})^2,\qquad
 H_k(\mathbf x)=\sum_{i=0}^{N-k-1}(x_i+x_{i+k})^2,
 \label{eq:pw-EH-definition}\\
 A_k(\mathbf x)&=\sum_{i=0}^{N-k-1}(p_i+p_{i+k}),\qquad
 b_k(\mathbf x)=2(N-k)-A_k(\mathbf x).
 \label{eq:pw-A-definition}
\end{align}
\end{subequations}}
Their values on the columns defined in \eqref{eq:pw-normalized-profile}
are denoted by
\begin{equation}
 \begin{gathered}
 E_k^{(n)}=E_k(\mathbf x^{(n)}),\qquad
 H_k^{(n)}=H_k(\mathbf x^{(n)}),\\
 A_k^{(n)}=A_k(\mathbf x^{(n)}),\qquad
 b_k^{(n)}=b_k(\mathbf x^{(n)}).
 \end{gathered}
 \label{eq:pw-column-quantity-notation}
\end{equation}
For a fixed argument \(\mathbf x\), we also write \(E_k,H_k,A_k,b_k\), suppressing the superscripts. Here \(E_k\) measures the differences of magnitudes at lag \(k\), while \(b_k\) measures the deviation of \(A_k\) from the value \(2(N-k)\) corresponding to equal magnitudes. Its interpretation as a boundary imbalance is later derived in \eqref{eq:pw-boundary-form}. At this point, we have the following result.

\begin{lemma}[Identity for the magnitude contribution]\label{lem:pw-gap}
For every nonnegative \(\mathbf x\) with
\(\sum_{i=0}^{N-1}x_i^2=N\), and every \(1\le k<N\),
\begin{equation}
 (N-k)^2-\gamma_k(\mathbf x)^2
 =(N-k)b_k+\frac14\left(E_kH_k-b_k^2\right).
 \label{eq:pw-gap-identity}
\end{equation}
\end{lemma}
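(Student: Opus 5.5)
The identity in Lemma~\ref{lem:pw-gap} is purely algebraic, and I would prove it by expressing \(\gamma_k(\mathbf x)\) through \(A_k\) and \(E_k\), and \(E_kH_k\) through \(A_k\) and \(\gamma_k\).

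\emph{Step 1.} Expanding the squares in \eqref{eq:pw-EH-definition} and using \(p_i=x_i^2\), I will record
\begin{equation*}
 E_k=A_k-2\gamma_k(\mathbf x),\qquad H_k=A_k+2\gamma_k(\mathbf x).
\end{equation*}
Writing \(\gamma=\gamma_k(\mathbf x)\), it follows that \(E_kH_k=A_k^2-4\gamma^2\), that is,
\begin{equation*}
 \gamma^2=\tfrac14\left(A_k^2-E_kH_k\right).
\end{equation*}

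\emph{Step 2.} I substitute \(A_k=2(N-k)-b_k\) from \eqref{eq:pw-A-definition}, which gives
\begin{equation*}
 A_k^2=4(N-k)^2-4(N-k)b_k+b_k^2 .
\end{equation*}
Therefore
\begin{equation*}
 \gamma^2=(N-k)^2-(N-k)b_k+\tfrac14 b_k^2-\tfrac14E_kH_k .
\end{equation*}
Rearranging yields
\begin{equation*}
 (N-k)^2-\gamma^2=(N-k)b_k+\tfrac14\left(E_kH_k-b_k^2\right),
\end{equation*}
which is exactly \eqref{eq:pw-gap-identity}.

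The normalization \(\sum_i x_i^2=N\) and the nonnegativity of \(\mathbf x\) are not actually used in this identity. They only matter later, when \(b_k\) is interpreted as a boundary imbalance and \(\gamma_k(\mathbf x)\ge0\) is used. There is no real obstacle here; the only care needed is to keep the sign conventions for \(b_k\) and the factor \(\tfrac14\) consistent.
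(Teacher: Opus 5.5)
Your proof is correct and follows the paper's argument: expand the squares to get \(E_k=A_k-2\gamma_k(\mathbf x)\) and \(H_k=A_k+2\gamma_k(\mathbf x)\), form \(E_kH_k=A_k^2-4\gamma_k(\mathbf x)^2\), and substitute \(A_k=2(N-k)-b_k\). You are also right that the normalization is not needed. Of the sign assumption, only realness of \(\mathbf x\) is used, since it makes \(\gamma_k(\mathbf x)=\sum_i x_ix_{i+k}\) real.
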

\begin{IEEEproof} {Please refer to Appendix~\ref{app:tit-magnitude-identity}.} \end{IEEEproof}

We now sum \eqref{eq:pw-gap-identity} over the columns using \eqref{eq:pw-column-quantity-notation}.  The row normalization of unitary matrices gives
\begin{equation}
 \sum_{n=0}^{N-1}p_i^{(n)}
 =N\sum_{n=0}^{N-1}|u_{i,n}|^2
 =N(\mathbf U\mathbf U^{\mathrm H})_{ii}=N.
 \label{eq:pw-row-accounting}
\end{equation}
Sum \eqref{eq:pw-A-definition} over the columns and apply \eqref{eq:pw-row-accounting} to obtain
\begin{subequations}
\begin{align}
 \sum_{n=0}^{N-1}A_k^{(n)}
 &=\sum_{i=0}^{N-k-1}
       \left(\sum_{n=0}^{N-1}p_i^{(n)}
             +\sum_{n=0}^{N-1}p_{i+k}^{(n)}\right)=2N(N-k),
 \label{eq:pw-A-column-sum}\\
 \sum_{n=0}^{N-1}b_k^{(n)}
 &=2N(N-k)-\sum_{n=0}^{N-1}A_k^{(n)}=0.
 \label{eq:pw-boundary-column-cancellation}
\end{align}
\end{subequations}
Thus the linear term \((N-k)b_k\) in \eqref{eq:pw-gap-identity}
cancels after summation over all columns.  Substituting
\eqref{eq:pw-phase-magnitude-split} and
\eqref{eq:pw-gap-identity} into \eqref{eq:pw-deficit-entrywise},
then using \eqref{eq:pw-boundary-column-cancellation}, yields
\begin{equation}
 \begin{split}
 N^2B_{\mathcal K}(\mathbf U)
 ={}&\frac14\sum_{n=0}^{N-1}\sum_{k\in\mathcal K}
       \left[E_k^{(n)}H_k^{(n)}-(b_k^{(n)})^2\right]\\
 &+\sum_{n=0}^{N-1}\sum_{k\in\mathcal K}
       \left[\gamma_k(\mathbf x^{(n)})^2
                         -N^2|\gamma_k(\mathbf u_n)|^2\right].
 \end{split}
 \label{eq:pw-deficit-decomposition}
\end{equation}
Because the second sum in \eqref{eq:pw-deficit-decomposition}
is nonnegative by \eqref{eq:pw-phase-relaxation}, a sufficient
condition for \(B_{\mathcal K}(\mathbf U)\ge0\) is
\begin{equation}
 \sum_{n=0}^{N-1}\sum_{k\in\mathcal K}(b_k^{(n)})^2
 \le\sum_{n=0}^{N-1}\sum_{k\in\mathcal K}E_k^{(n)}H_k^{(n)}.
 \label{eq:pw-deficit-sufficient-condition}
\end{equation}
This condition controls the chosen lags jointly and does not require \(b_k^2\le E_kH_k\) for every lag and every nonnegative $\mathbf{x}$.

Next, to further simplify \eqref{eq:pw-deficit-sufficient-condition}, we express \(b_k\) as a boundary imbalance and relate \(E_kH_k\) to differences of powers.  Since \(\sum_i p_i=N\), the definition \eqref{eq:pw-A-definition} gives
\begin{equation}
 A_k=\sum_{i=0}^{N-k-1}p_i+\sum_{i=k}^{N-1}p_i
 =2N-\sum_{i<k}p_i-\sum_{i\ge N-k}p_i.
 \label{eq:pw-A-boundaries}
\end{equation}
For any set \(\Set{K}\), let \(\mathbb{I}_\Set{K}\) denote its indicator,
which equals one on \(\Set{K}\) and zero elsewhere.  Define
\begin{equation}
 \eta_{\ell,i}
 =\mathbb{I}_{\{0,\ldots,\ell-1\}}(i)
  +\mathbb{I}_{\{N-\ell,\ldots,N-1\}}(i)
  -\frac{2\ell}{N},~~ 1\le\ell<N,~~0\le i<N.
 \label{eq:pw-boundary-coefficients}
\end{equation}
Equations~\eqref{eq:pw-A-definition} and \eqref{eq:pw-A-boundaries},
together with \(\sum_i p_i=N\), imply
\begin{equation}
 b_k=\sum_{i<k}p_i+\sum_{i\ge N-k}p_i-2k
     =\sum_{i=0}^{N-1}\eta_{k,i}p_i.
 \label{eq:pw-boundary-form}
\end{equation}
Observe that the two sums in \eqref{eq:pw-boundary-form} select the first \(k\)
and last \(k\) entries, respectively.  When they overlap, each shared entry
appears in both sums, as illustrated in
Figure~\ref{fig:pw-boundary-imbalance}.  If \(p_i=1\), their total
is \(2k\), so \(b_k\) measures the excess of the actual boundary
power over this reference value.  In particular,
\eqref{eq:pw-boundary-form} can also be written as
\begin{equation}
 b_k=\sum_{i=0}^{k-1}(p_i-1)
          +\sum_{i=N-k}^{N-1}(p_i-1).
 \label{eq:pw-boundary-excess}
\end{equation}

\begin{figure}[!t]
\centering
\begin{tikzpicture}[x=0.88cm,y=0.88cm,
  every node/.style={font=\small},
  every path/.style={}]
  \node[anchor=west] at (0,2.55) {Normalized powers for \(N=10\)};
  \draw[->] (0,0) -- (10.35,0);
  \draw[->] (0,0) -- (0,1.9);
  \node[anchor=east] at (-0.15,1.85) {\(p_i\)};
  \foreach \i/\p in {0/1.5,1/1.25,2/0.75,3/0.5,4/1,5/1,6/0.5,7/0.75,8/1.25,9/1.5} {
    \filldraw[fill=black!12,line width=0.5pt]
      (\i+0.12,0) rectangle (\i+0.88,\p);
    \node[below] at (\i+0.5,0) {\(\i\)};
  }
  \draw[densely dashed] (0,1) -- (10.15,1);
  \foreach \i/\p in {0/1.5,1/1.25,2/0.75,3/0.5,4/1,5/1,6/0.5,7/0.75,8/1.25,9/1.5} {
    \node[font=\scriptsize,above,fill=white,inner sep=1pt]
      at (\i+0.5,\p) {\(\p\)};
  }
  \node[anchor=west,font=\scriptsize] at (10.15,1) {\(p_i=1\)};
  \node[anchor=west,font=\scriptsize] at (10.2,-0.12) {\(i\)};

  \node[anchor=west] at (0,-0.8)
    {(a) \(k=3\), disjoint boundaries};
  \node[anchor=east,font=\scriptsize] at (-0.15,-1.2) {First \(k\)};
  \node[anchor=east,font=\scriptsize] at (-0.15,-1.6) {Last \(k\)};
  \foreach \i in {0,...,9} {
    \draw[fill=white,line width=0.3pt]
      (\i,-1.36) rectangle (\i+1,-1.04);
    \draw[fill=white,line width=0.3pt]
      (\i,-1.76) rectangle (\i+1,-1.44);
  }
  \foreach \i in {0,1,2} {
    \filldraw[fill=black!12,line width=0.3pt]
      (\i,-1.36) rectangle (\i+1,-1.04);
    \node[font=\scriptsize] at (\i+0.5,-1.2) {\(\i\)};
  }
  \foreach \i in {7,8,9} {
    \filldraw[fill=black!12,line width=0.3pt]
      (\i,-1.76) rectangle (\i+1,-1.44);
    \node[font=\scriptsize] at (\i+0.5,-1.6) {\(\i\)};
  }
  \node[anchor=west] at (0,-2.12)
    {Boundary power \(=3.5+3.5=7\), \(\quad b_3=7-6=1\)};

  \node[anchor=west] at (0,-2.8)
    {(b) \(k=7\), overlapping boundaries};
  \node[anchor=east,font=\scriptsize] at (-0.15,-3.2) {First \(k\)};
  \node[anchor=east,font=\scriptsize] at (-0.15,-3.6) {Last \(k\)};
  \foreach \i in {0,...,9} {
    \draw[fill=white,line width=0.3pt]
      (\i,-3.36) rectangle (\i+1,-3.04);
    \draw[fill=white,line width=0.3pt]
      (\i,-3.76) rectangle (\i+1,-3.44);
  }
  \foreach \i in {0,...,6} {
    \filldraw[fill=black!12,line width=0.3pt]
      (\i,-3.36) rectangle (\i+1,-3.04);
    \node[font=\scriptsize] at (\i+0.5,-3.2) {\(\i\)};
  }
  \foreach \i in {3,...,9} {
    \filldraw[fill=black!12,line width=0.3pt]
      (\i,-3.76) rectangle (\i+1,-3.44);
    \node[font=\scriptsize] at (\i+0.5,-3.6) {\(\i\)};
  }
  \draw[densely dashed,line width=0.8pt]
    (3,-3.83) rectangle (7,-2.97);
  \node[anchor=west] at (0,-4.15)
    {Boundary power \(=6.5+6.5=13\), \(\quad b_7=13-14=-1\)};
\end{tikzpicture}
\caption{Boundary imbalance in \eqref{eq:pw-boundary-form} and \eqref{eq:pw-boundary-excess}. The total power is \(N=10\), and the horizontal dashed
line marks the reference value \(p_i=1\). In (b), the overlap indices \(3,4,5,6\) are counted twice.}
\label{fig:pw-boundary-imbalance}
\end{figure}

Finally, define the total variation of the powers at lag \(k\) by
\begin{equation}
 V_k=\sum_{i=0}^{N-k-1}|p_i-p_{i+k}|.
 \label{eq:pw-V-definition}
\end{equation}
Since \(\mathbf x\) is nonnegative,
\begin{equation}
 |p_i-p_{i+k}|=|x_i-x_{i+k}|(x_i+x_{i+k}).
 \label{eq:pw-power-factorization}
\end{equation}
Cauchy--Schwarz applied to \eqref{eq:pw-V-definition} and
\eqref{eq:pw-power-factorization}, followed by
\eqref{eq:pw-EH-definition}, gives
\begin{equation}
 V_k^2
 \le\left(\sum_{i=0}^{N-k-1}(x_i-x_{i+k})^2\right)
     \left(\sum_{i=0}^{N-k-1}(x_i+x_{i+k})^2\right)
 =E_kH_k.
 \label{eq:pw-variation-bound}
\end{equation}
This reduces our remaining task to showing that
\begin{equation}\label{boundary_estimate}
\sum_{k\in\Set{K}}b_k^2 \leq \sum_{k\in\Set{K}}V_k^2
\end{equation}
holds for the considered sets $\Set{K}$, which will be detailed in the following sections.

\subsection{{A Sufficient Coefficient Condition}}\label{subsec:pw-contraction}

We now give a sufficient coefficient condition for the boundary estimate \eqref{boundary_estimate}. 

\begin{lemma}[Doubly substochastic contraction]\label{lem:pw-contraction}
Let \(\mathcal K\) be a nonempty set of allowed lags.  Suppose that there exist real coefficients $\{f_{i,k}^{(\ell)}\}$, independent of \(\mathbf p\), satisfying
\begin{equation}
 \sum_{i=0}^{N-1}\eta_{\ell,i}p_i = \sum_{k\in\mathcal K}\sum_{i=0}^{N-k-1}
 f^{(\ell)}_{i,k}(p_i-p_{i+k}),\qquad \ell\in\mathcal K,
 \label{eq:pw-expansion}
\end{equation}
for every real \(\mathbf p\).
For the normalized powers in
\eqref{eq:pw-profile-normalization}, the left side equals \(b_\ell\)
by \eqref{eq:pw-boundary-form}.
Assume there is a matrix \(\mathbf T=(t_{\ell k})\) and \(0\le\rho\le1\) such that
\begin{subequations}\label{eq:budgets_all}
\begin{align}
 |f^{(\ell)}_{i,k}|&\le t_{\ell k},
 \label{eq:pw-coefficient-capacity}\\
 \sum_{k\in\mathcal K}t_{\ell k}&\le\rho,\qquad
 \sum_{\ell\in\mathcal K}t_{\ell k}\le\rho.
 \label{eq:pw-coefficient-budgets}
\end{align}
\end{subequations}
In other words, the magnitudes
\(\lvert f_{i,k}^{(\ell)}\rvert\) are bounded by \(t_{\ell k}\)
uniformly in \(i\).  The matrix \(\mathbf T\) is nonnegative and
doubly substochastic, with every row sum and every column sum at
most \(\rho\). Then
{\begin{equation}
 \sum_{\ell\in\mathcal K}b_\ell^2
 \le\rho^2\sum_{k\in\mathcal K}V_k^2.
 \label{eq:pw-boundary-contraction}
\end{equation}}
\end{lemma}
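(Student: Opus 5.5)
We work directly from the expansion \eqref{eq:pw-expansion}. For each \(\ell\in\mathcal K\), the triangle inequality and the uniform capacity bound \eqref{eq:pw-coefficient-capacity} give
\begin{equation*}
 |b_\ell|
 \le\sum_{k\in\mathcal K}\sum_{i=0}^{N-k-1}|f^{(\ell)}_{i,k}|\,|p_i-p_{i+k}|
 \le\sum_{k\in\mathcal K}t_{\ell k}V_k ,
\end{equation*}
where \(V_k\) is the total variation \eqref{eq:pw-V-definition}. Here we use that the left side of \eqref{eq:pw-expansion} equals \(b_\ell\) for the normalized powers, by \eqref{eq:pw-boundary-form}. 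This reduces the claim to the finite-dimensional inequality \(\|\mathbf T\mathbf v\|_2\le\rho\|\mathbf v\|_2\) for the nonnegative vector \(\mathbf v=(V_k)_{k\in\mathcal K}\), applied to the componentwise bound \(|b_\ell|\le(\mathbf T\mathbf v)_\ell\).

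Next we bound each component by Cauchy--Schwarz with weights \(t_{\ell k}\ge0\):
\begin{equation*}
 \Bigl(\sum_{k\in\mathcal K}t_{\ell k}V_k\Bigr)^2
 \le\Bigl(\sum_{k\in\mathcal K}t_{\ell k}\Bigr)\Bigl(\sum_{k\in\mathcal K}t_{\ell k}V_k^2\Bigr)
 \le\rho\sum_{k\in\mathcal K}t_{\ell k}V_k^2 ,
\end{equation*}
using the row budget in \eqref{eq:pw-coefficient-budgets}. We then sum over \(\ell\in\mathcal K\), exchange the order of summation, and apply the column budget \(\sum_{\ell}t_{\ell k}\le\rho\). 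This yields
\begin{equation*}
 \sum_{\ell\in\mathcal K}b_\ell^2\le\rho\sum_{k\in\mathcal K}V_k^2\sum_{\ell\in\mathcal K}t_{\ell k}\le\rho^2\sum_{k\in\mathcal K}V_k^2 ,
\end{equation*}
which is \eqref{eq:pw-boundary-contraction}. This is the Schur test, that is, the Riesz--Thorin interpolation bound \(\|\mathbf T\|_2\le\sqrt{\|\mathbf T\|_1\|\mathbf T\|_\infty}\), specialized to a nonnegative matrix.

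Neither step is a serious obstacle. The only points needing care are the following. First, \(t_{\ell k}\) is forced to be nonnegative by \eqref{eq:pw-coefficient-capacity}, which legitimizes the weighted Cauchy--Schwarz step. Second, the expansion must hold identically in real \(\mathbf p\), so it can be specialized to the normalized column powers. No further structure of the coefficients is needed; the hard work lies in constructing \(f^{(\ell)}_{i,k}\) and \(\mathbf T\) with these budgets for prefixes, which is deferred to later sections.
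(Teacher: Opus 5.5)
Your proposal is correct and follows the paper's proof in Appendix~\ref{app:tit-contraction} step for step. You bound \(|b_\ell|\le\sum_k t_{\ell k}V_k\) via the triangle inequality and the capacity bound, apply weighted Cauchy--Schwarz with the row budget, then sum over \(\ell\) and use the column budget, which is exactly the Schur-test argument the authors give.
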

\begin{IEEEproof} {Please refer to Appendix~\ref{app:tit-contraction}.} \end{IEEEproof}

\Needspace{5\baselineskip}
\subsection{{Solvability of the Coefficient Equations}}\label{subsec:tit-coefficient-solvability}

{We now determine when coefficients satisfying the expansion in Lemma~\ref{lem:pw-contraction} exist. Since \eqref{eq:pw-expansion} holds for every real \(\mathbf p\), comparing the coefficient of \(p_i\) on both sides gives the vertex balance}
\begin{equation}
 \sum_{\substack{k\in\mathcal K\\i+k<N}}f_{i,k}^{(\ell)}
 -\sum_{\substack{k\in\mathcal K\\k\le i}}f_{i-k,k}^{(\ell)}
 =\eta_{\ell,i},\qquad 0\le i<N,\quad \ell\in\mathcal K.
 \label{eq:pw-flow-vertex-balance}
\end{equation}
The condition \eqref{eq:pw-flow-vertex-balance} can be restated in the following compact matrix form
\begin{equation}\label{eq:balance_matrix}
\mathbf{A}_{\Set{K}} \mathbf{f}^{(\ell)}=\boldsymbol{\eta}_{\ell},~~\forall \ell \in\Set{K},
\end{equation}
where $\boldsymbol{\eta}_\ell=\begin{bmatrix}\eta_{\ell,0},&\eta_{\ell,1},&\dotsc,&\eta_{\ell,N-1}\end{bmatrix}^T\in\mathbb{R}^N$, with $\eta_{\ell,i}=\mathbb{I}_{\{0,\dotsc,\ell-1\}}(i)+\mathbb{I}_{\{N-\ell,\dotsc,N-1\}}(i)-\frac{2\ell}{N}$. The matrix $\mathbf{A}_{\Set{K}}$ is an incidence matrix, defined as
\begin{equation}\label{eq:pw-incidence-matrix}
[\mathbf{A}_{\Set{K}}]_{j,q(i,k)}=\begin{cases}
                                    1, & \mbox{$j=i$;}  \\
                                    -1, & \mbox{$j=i+k$;} \\
                                    0, & \mbox{otherwise},
                                  \end{cases}
\end{equation}
with the column index generator $q(i,k)$ defined as 
$$
q(i,k)=\sum_{h\in\Set{K},h<k}(N-h)+i.
$$
The vector $\mathbf{f}^{(\ell)}$ is given by $[\mathbf{f}^{(\ell)}]_{q(i,k)}=f_{i,k}^{(\ell)}$. Figure~\ref{fig:pw-incidence-N6-K24} illustrates the underlying undirected graph of \(\mathbf A_{\mathcal K}\) for \(N=6\) and
\(\mathcal K=\{2,4\}\).
\begin{figure}[!t]
\centering
\color{black}
\begin{tikzpicture}[x=1.8cm,y=0.75cm,
  font=\fontsize{12}{14}\selectfont,text=black,
  every path/.style={draw=black,line width=0.85pt}]
  \coordinate (v0) at (0,0);
  \coordinate (v1) at (1,0);
  \coordinate (v2) at (2,0);
  \coordinate (v3) at (3,0);
  \coordinate (v4) at (4,0);
  \coordinate (v5) at (5,0);

  \draw (v1) .. controls (1,2.75) and (5,2.75) .. (v5);
  \draw (v1) .. controls (1,1.15) and (3,1.15) .. (v3);
  \draw (v3) .. controls (3,1.15) and (5,1.15) .. (v5);
  \draw (v0) .. controls (0,-2.75) and (4,-2.75) .. (v4);
  \draw (v0) .. controls (0,-1.15) and (2,-1.15) .. (v2);
  \draw (v2) .. controls (2,-1.15) and (4,-1.15) .. (v4);

  \foreach \i in {0,1,2,3,4,5} {
    \filldraw[fill=white,draw=black] (v\i) circle (2.2pt);
    \node[right=5pt] at (v\i) {\(\i\)};
  }
  \pgfresetboundingbox
  \path[use as bounding box,draw=none]
    (-0.06,-2.12) rectangle (5.36,2.12);
\end{tikzpicture}
\caption{The underlying undirected graph for \(N=6\) and \(\mathcal K=\{2,4\}\). Each vertex is labelled by its index, and an edge joins \(i\) and \(j\) exactly when \(|i-j|\in\mathcal K\).}
\label{fig:pw-incidence-N6-K24}
\end{figure}

Now, the remaining task becomes to prove the existence of $\mathbf{f}^{(\ell)}$ satisfying \eqref{eq:balance_matrix} and having doubly substochastic entrywise upper bounds. We first apply the standard solvability criterion for incidence matrices to the boundary vectors in \eqref{eq:pw-boundary-coefficients}, yielding the following lemma.

\begin{lemma}[Solvability of the balance equations]\label{lem:pw-balance-solvability}
Let \(\mathcal K\subseteq\{1,\ldots,N-1\}\) be nonempty, and let
\(d=\gcd\{k:k\in\mathcal K\}\) be the greatest common divisor of its
elements.  The equations \eqref{eq:balance_matrix} have real
solutions for every \(\ell\in\mathcal K\) if and only if
\(d\mid N\) and \(\operatorname{rank}(\mathbf A_{\mathcal K})=N-d\).
\end{lemma}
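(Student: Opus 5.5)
The plan is to use the standard description of the range of a graph incidence matrix, and then turn the solvability condition into counting statements about the connected components of the graph \(G_{\mathcal K}\). This graph has vertex set \(\{0,\ldots,N-1\}\) and an edge \(\{i,i+k\}\) for each \(k\in\mathcal K\), as in Figure~\ref{fig:pw-incidence-N6-K24}.

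\textbf{Step 1: reduction to component sums.} For an oriented incidence matrix, the range of \(\mathbf A_{\mathcal K}\) is the orthogonal complement of the span of the indicators \(\mathbb I_C\) of the connected components \(C\). Hence \(\operatorname{rank}(\mathbf A_{\mathcal K})=N-c\), where \(c\) is the number of components. It follows that \eqref{eq:balance_matrix} is solvable for a given \(\ell\) if and only if \(\sum_{i\in C}\eta_{\ell,i}=0\) for every component \(C\). By \eqref{eq:pw-boundary-coefficients} this condition reads
\begin{equation*}
 |C\cap\{0,\ldots,\ell-1\}|+|C\cap\{N-\ell,\ldots,N-1\}|=\frac{2\ell|C|}{N}.
\end{equation*}
Every edge joins indices that are congruent modulo \(d\). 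So each component lies inside a single residue class modulo \(d\), which gives \(c\ge d\) and \(|C|\le\lceil N/d\rceil\).

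\textbf{Step 2: sufficiency.} Suppose \(d\mid N\) and \(\operatorname{rank}(\mathbf A_{\mathcal K})=N-d\). Then \(c=d\), so the components are exactly the \(d\) residue classes, each of size \(N/d\). Every \(\ell\in\mathcal K\) is a multiple of \(d\). Consequently \(\{0,\ldots,\ell-1\}\) contains exactly \(\ell/d\) elements of each class. Since \(d\mid N\), the same holds for \(\{N-\ell,\ldots,N-1\}\). The left side above therefore equals \(2\ell/d\), which matches \(2\ell(N/d)/N\). This proves solvability for every \(\ell\in\mathcal K\).

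\textbf{Step 3: necessity.} Assume solvability for every \(\ell\in\mathcal K\), and fix a component \(C\).

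First I count \(\ell\)-chains. The \(\ell\)-edges inside \(C\) split it into chains \(i,i+\ell,i+2\ell,\ldots\). Each chain starts at an index in \(\{0,\ldots,\ell-1\}\) and ends at an index in \(\{N-\ell,\ldots,N-1\}\). Hence both intersections in the display equal the number of such chains. The condition therefore says that \(\ell|C|/N\) is an integer for every \(\ell\in\mathcal K\).

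Next I apply Bezout. Writing \(d\) as an integer combination of elements of \(\mathcal K\) shows that \(N\mid d|C|\). Equivalently, \(|C|\) is a multiple of \(N/\gcd(N,d)\).

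Now suppose \(d\nmid N\). Then \(\gcd(N,d)\le d/2\), so \(|C|\ge 2N/d\). Because \(d\le\min\mathcal K<N\), we have \(2N/d>N/d+1>\lceil N/d\rceil\). This contradicts \(|C|\le\lceil N/d\rceil\), so \(d\mid N\).

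With \(d\mid N\), \(|C|\) is a positive multiple of \(N/d\) and also at most \(N/d\). So \(|C|=N/d\), and \(C\) is a full residue class. Thus \(c=d\) and \(\operatorname{rank}(\mathbf A_{\mathcal K})=N-d\).

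\textbf{Main obstacle.} I expect the necessity direction to be the delicate part. Components can be strictly smaller than residue classes because of the block boundary; for example, \(N=7\) and \(\mathcal K=\{4,6\}\) leaves vertex \(3\) isolated. The argument handles this through the chain-counting interpretation of the component sums combined with the Bezout divisibility step.
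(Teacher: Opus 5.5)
Your proposal is correct and follows the paper's proof. It uses the same incidence-matrix range criterion, the same chain-counting argument giving $\ell|C|/N\in\mathbb Z$, the same Bezout step giving $d|C|/N\in\mathbb Z$, and the same sufficiency computation, differing only in the final count: the paper sums $|C_a|\ge N/d$ over the $s\ge d$ components to force $s=d$ and $|C_a|=N/d$ (so $d\mid N$ follows), whereas you compare each component with the residue-class bound $\lceil N/d\rceil$, which works equally well.
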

\begin{IEEEproof} {Please refer to Appendix~\ref{app:tit-balance-solvability}.} \end{IEEEproof}

Next, we show that the conditions in Lemma~\ref{lem:pw-balance-solvability} are satisfied for various lag sets $\Set{K}$ of interest. 
\begin{enumerate}
\item Single lag $\Set{K}=\{k\}$, $k\mid N$: In this case, we have $d=k$, and hence $d\mid N$ is satisfied. The graph generated by $\mathbf{A}_{\Set{K}}$ consists of $k$ disjoint chains $\Set{C}_i=\{i,i+k,\dotsc,i+N-k\}$, $i=0,\dotsc,k-1$. Thus we have $s=k=d$ and hence ${\rm rank}(\mathbf{A}_{\Set{K}})=N-d$.
\item Any $\Set{K}\supseteq\{1\}$: In this case, we have $d=1$ since $1\in\Set{K}$. Thus $d\mid N$ holds. Furthermore, since every node $i$ is connected to $i+1$, the graph generated by $\mathbf{A}_{\Set{K}}$ is connected, thereby $s=d=1$, and ${\rm rank}(\mathbf{A}_{\Set{K}})=N-1$.
\end{enumerate}
\begingroup
\begingroup
The incidence equations are solvable for the dividing single lags and all prefixes identified above. Section~\ref{sec:tit-single} treats arbitrary single lags by estimating correlations on each coordinate chain, while Section~\ref{subsec:pw-prefix-construction} constructs coefficients satisfying the contraction condition for every prefix.
\par\endgroup
\par\endgroup

\section{{Optimality for Single Lags and Prefixes}}\label{sec:tit-lag-optimality}

\begingroup
We use the correlation estimates of Section~\ref{sec:tit-proof-strategy} to determine the optimum at every single lag and to bound every prefix. The single lag result includes an explicit minimizing modulation and its improvement over the full \ac{IDFT}, while the prefix construction leads to the complete characterization of prefix minimizers.
\par\endgroup

\subsection{{Single Lag Optimality}}\label{sec:tit-single}
\begingroup
Fix \(1\le k<N\), and write \(N=qk+h\) with \(q\ge1\) and \(0\le h<k\). The lag \(k\) partitions the sample indices into chains whose lengths differ by at most one:

\begin{subequations}
\begin{align}
 \mathcal C_j&=\{j+tk:t\ge0,\ j+tk<N\},\qquad 0\le j<k,
 \label{eq:pw-residue-chains}\\
 m_j:=|\mathcal C_j|&=\begin{cases}q+1,&0\le j<h,\\q,&h\le j<k.\end{cases}
 \label{eq:pw-chain-lengths}
\end{align}
\end{subequations}

Let \(\mathbf P\) reorder coordinates by listing \(\mathcal C_0,\ldots,\mathcal C_{k-1}\) in that order, with increasing indices within each chain. Explicitly, for every \(\mathbf z\in\mathbb C^N\),

\begin{equation}
 (\mathbf P\mathbf z)_{\sum_{a=0}^{j-1}m_a+t}=z_{j+tk},
 \qquad 0\le j<k,\quad 0\le t<m_j.
 \label{eq:tit-chain-permutation}
\end{equation}

For \(m\ge1\), denote the normalized \(m\)-point \ac{IDFT} by \(\mathbf F_m^{\mathrm H}\), with entries

\begin{equation}
 [\mathbf F_m^{\mathrm H}]_{t,n}=m^{-1/2}e^{2\pi\mathrm i tn/m},
 \qquad 0\le t,n<m.
 \label{eq:tit-chain-idft}
\end{equation}

Applying this transform separately on each chain gives the unitary modulation

\begin{equation}
 \mathbf U_{\rm blk}=\mathbf P^{\mathrm T}
 \operatorname{diag}(\mathbf F_{m_0}^{\mathrm H},\ldots,\mathbf F_{m_{k-1}}^{\mathrm H}).
 \label{eq:pw-block-competitor-matrix}
\end{equation}

The following proposition gives a matching upper bound for every unitary matrix, including matrices whose columns have nonzero entries on several chains.

\begin{proposition}[Exact single lag optimum]\label{prop:pw-single-lag} For the chains and matrix in \eqref{eq:pw-residue-chains}--\eqref{eq:pw-block-competitor-matrix},

\begin{align}
 \max_{\mathbf U^{\mathrm H}\mathbf U=\mathbf I_N}S_{\{k\}}(\mathbf U)
 &=S_{\{k\}}(\mathbf U_{\rm blk})
 =\sum_{j=0}^{k-1}\frac{(m_j-1)^2}{m_j}\nonumber\\
 &=N-2k+\frac{h}{q+1}+\frac{k-h}{q}.
 \label{eq:tit-single-correlation-optimum}
\end{align}

The difference from the full \ac{IDFT} benchmark is

\begin{equation}
 S_{\{k\}}(\mathbf U_{\rm blk})-S_{\{k\}}(\mathbf V)
 =\frac{h(k-h)}{Nq(q+1)}.
 \label{eq:tit-single-correlation-gain}
\end{equation}

Thus \(\mathbf V\) maximizes \(S_{\{k\}}\) if and only if \(h=0\), equivalently \(k\mid N\). \end{proposition}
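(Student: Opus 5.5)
The plan is to first turn the lag-$k$ problem into $k$ independent lag-one problems, one on each chain $\mathcal C_j$. Let $\mathbf J$ be the aperiodic shift by $k$, so that $\gamma_k(\mathbf c)=\mathbf c^{\mathrm H}\mathbf J\mathbf c$. Under the chain reordering $\mathbf P$ of \eqref{eq:tit-chain-permutation}, $\mathbf P\mathbf J\mathbf P^{\mathrm T}$ is block diagonal. Its $j$th block is the $m_j\times m_j$ lag-one shift $\mathbf J_{m_j}$. The reason is that the pairs $(i,i+k)$ in $\gamma_k$ are exactly the consecutive pairs inside each chain.

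For a column $\mathbf u_n$, write $\mathbf c_{j,n}\in\mathbb C^{m_j}$ for its restriction to $\mathcal C_j$ and set $w_{j,n}=\|\mathbf c_{j,n}\|_2^2$. Then $\gamma_k(\mathbf u_n)=\sum_j a_{j,n}$, where $a_{j,n}=\mathbf c_{j,n}^{\mathrm H}\mathbf J_{m_j}\mathbf c_{j,n}$, and $\sum_j w_{j,n}=1$. Applying the triangle inequality and then Jensen (weighted Cauchy--Schwarz) with weights $w_{j,n}$ gives
\[
|\gamma_k(\mathbf u_n)|^2\le\Bigl(\sum_j |a_{j,n}|\Bigr)^2\le\sum_{j:\,w_{j,n}>0}\frac{|a_{j,n}|^2}{w_{j,n}}.
\]
Summing over $n$ reduces the proposition to a per-chain inequality:
\begin{equation*}
\sum_{n=0}^{N-1} w_{j,n}\,\bigl|\mathbf d_{j,n}^{\mathrm H}\mathbf J_{m_j}\mathbf d_{j,n}\bigr|^2\le\frac{(m_j-1)^2}{m_j},\qquad \mathbf d_{j,n}=\mathbf c_{j,n}/\|\mathbf c_{j,n}\|_2 .
\end{equation*}
Unitarity of $\mathbf U$ yields the constraint $\sum_n w_{j,n}\mathbf d_{j,n}\mathbf d_{j,n}^{\mathrm H}=\mathbf I_{m_j}$: the rows of $\mathbf U$ restricted to $\mathcal C_j$ are orthonormal. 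So the family $\{\sqrt{w_{j,n}}\mathbf d_{j,n}\}_n$ is a Parseval frame in $\mathbb C^{m_j}$ with $N$ elements, rather than an orthonormal basis.

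The main step, and the one I expect to be the obstacle, is this frame version of the lag-one bound, i.e.\ Theorem~\ref{cor:pw-eisl} part 2 with $r=1$ and block length $m=m_j$. I would prove it by rerunning the argument of Section~\ref{sec:tit-proof-strategy} with weights. Define the normalized powers $p^{(n)}_i=m|d_{n,i}|^2$, so each column still satisfies $\sum_i p_i^{(n)}=m$. Replace the row accounting \eqref{eq:pw-row-accounting} by its weighted form $\sum_n w_n p_i^{(n)}=m$, which is the diagonal of the frame identity. With these substitutions the phase/magnitude split \eqref{eq:pw-phase-magnitude-split} goes through, as do Lemma~\ref{lem:pw-gap} and the boundary cancellation $\sum_n w_n b_1^{(n)}=0$. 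Lemma~\ref{lem:pw-contraction} for $\mathcal K=\{1\}$ is a per-column statement, so it applies unchanged. Together these give $\sum_n w_n\bigl[(m-1)^2-m^2|\mathbf d_n^{\mathrm H}\mathbf J_m\mathbf d_n|^2\bigr]\ge0$. Since $\sum_n w_n=\operatorname{tr}\mathbf I_m=m$, this is exactly the required inequality. The point to check carefully is that every step in that section uses unitarity only through the column normalization and the row sums. If some step instead needs genuine orthogonality between columns, I would fall back on viewing the objective as a linear functional of the frame measure $\sum_n w_n\delta_{\mathbf d_n}$, maximizing it over the convex set cut out by the linear constraint $\sum_n w_n\mathbf d_n\mathbf d_n^{\mathrm H}=\mathbf I_m$, and reducing to extreme points with few atoms.

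For attainment, each column of $\mathbf U_{\rm blk}$ is supported on a single chain $\mathcal C_j$, where it is an $m_j$-point \ac{IDFT} column. Formula \eqref{eq:tit-idft-correlation} with $N$ replaced by $m_j$ and lag one gives $|\gamma_k|=(m_j-1)/m_j$. Each chain contributes $m_j$ such columns, so it contributes $(m_j-1)^2/m_j$, matching the bound. The closed form in \eqref{eq:tit-single-correlation-optimum} follows from $\sum_j m_j=N$ and \eqref{eq:pw-chain-lengths}:
\[
\sum_j\Bigl(m_j-2+\frac1{m_j}\Bigr)=N-2k+\frac{h}{q+1}+\frac{k-h}{q}.
\]
Subtracting $(N-k)^2/N$ from \eqref{eq:tit-idft-benchmark} and substituting $N=qk+h$ gives \eqref{eq:tit-single-correlation-gain} after routine algebra. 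This gap vanishes if and only if $h=0$, since $0\le h<k$.
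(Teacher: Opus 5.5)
Your proposal is correct and follows essentially the same route as the paper: block-diagonalize the lag-$k$ shift along the chains, apply weighted Cauchy--Schwarz across chains, and on each chain combine Lemma~\ref{lem:pw-gap} (with $k=1$, block length $m_j$) with $b_1^2\le E_1H_1$ (the lag-one coefficients $1-2(i+1)/m$ have magnitude at most $1-2/m$), so that the linear term $(m-1)b_1$ cancels under the row sums. The difference is only in packaging: the paper states the per-chain step as the single-vector inequality \eqref{eq:tit-chain-vector-bound}, whose right side after division by $\|\mathbf z\|_2^2$ is linear in the powers, and then sums it using \eqref{eq:tit-chain-power-sum}--\eqref{eq:tit-chain-endpoint-sum}; those sums are exactly the diagonal of your Parseval-frame identity, so the weighted rerun works as you expected and the extreme-point fallback is never needed.
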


\begin{IEEEproof} Please refer to Appendix~\ref{app:tit-single-lag}. \end{IEEEproof}
\par\endgroup

\subsection{{A Uniform Coefficient Bound for Prefixes}}\label{subsec:pw-prefix-construction}

Fix \(\mathcal K=\{1,\ldots,r\}\), where \(1\le r<N\). We first express the boundary imbalance using only allowed differences, which gives coefficients satisfying \eqref{eq:pw-expansion}. We then choose the paths used in this representation so that the coefficients also satisfy the budgets \eqref{eq:budgets_all}. Throughout this subsection, we fix a \(\ell\in\mathcal K\).

\subsubsection{{The Complete Graph and the Full Prefix}}\label{subsubsec:tit-full-prefix}
Let us define the reflection transformation as $R(i)=N-1-i$. We note that the boundary coefficients \eqref{eq:pw-boundary-coefficients} satisfy
\begin{subequations}\label{eq:pw-boundary}
\begin{align}
 \sum_{i=0}^{N-1}\eta_{\ell,i}&=0,
 \label{eq:pw-boundary-zero-sum}\\
 \eta_{\ell,R(i)}&=\eta_{\ell,i},\qquad
 |\eta_{\ell,i}-\eta_{\ell,j}|\le1.
 \label{eq:pw-boundary-symmetry}
\end{align}
\end{subequations}
To elaborate, reflection exchanges the two boundary indicators in \eqref{eq:pw-boundary-coefficients}, and their sum takes at most two consecutive integer values. This proves \eqref{eq:pw-boundary-symmetry}. For every real \(\mathbf p\), \eqref{eq:pw-boundary-zero-sum} gives the complete graph representation
\begin{equation}
 \begin{split}
 \sum_{i=0}^{N-1}\eta_{\ell,i}p_i
 &=\frac1N\sum_{0\le i<j<N}
       (\eta_{\ell,i}-\eta_{\ell,j})(p_i-p_j)\\
 &=\frac1N\sum_{k=1}^{N-1}\sum_{i=0}^{N-k-1}
       (\eta_{\ell,i}-\eta_{\ell,i+k})(p_i-p_{i+k}).
 \end{split}
 \label{eq:pw-pair-identity}
\end{equation}
For the full prefix \(r=N-1\), every edge in \eqref{eq:pw-pair-identity} is allowed. We can therefore take
\begin{equation}
 f_{i,k}^{(\ell)}
 =\frac{\eta_{\ell,i}-\eta_{\ell,i+k}}N,\qquad
 |f_{i,k}^{(\ell)}|\le t_{\ell k}:=\frac1N.
 \label{eq:pw-full-prefix-coefficients}
\end{equation}
The capacity estimate follows from \eqref{eq:pw-boundary-symmetry}, and the budgets are
\begin{equation}
 \sum_{k=1}^{N-1}t_{\ell k}
 =\sum_{\ell=1}^{N-1}t_{\ell k}
 =\frac{N-1}{N}<1.
 \label{eq:pw-full-prefix-budgets}
\end{equation}
Thus \eqref{eq:pw-expansion} and \eqref{eq:budgets_all} already hold for the full prefix case.

\subsubsection{{Replacing Forbidden Edges}}\label{subsubsec:tit-forbidden-edges}
Now suppose \(1\le r<N-1\).
The second sum in \eqref{eq:pw-pair-identity} includes terms with
\(k>r\). In what follows, we retain its terms with \(k\le r\), and transfer the contributions of the forbidden edges to allowed edges while preserving the identity.

First group each forbidden edge with its reflection.
Each group has the form \((a,R(b))\) and \((b,R(a))\), where
\begin{equation}
 0\le a\le b,\qquad
 R(b)-a>r\quad\Longleftrightarrow\quad a+b<N-r-1.
 \label{eq:pw-forbidden-pair}
\end{equation}
When \(a=b\), the edges coincide and their identical coefficient
\((\eta_{\ell,a}-\eta_{\ell,R(a)})/N\) is zero by
\eqref{eq:pw-boundary-symmetry}.
When \(a<b\), the two coefficients are opposite.
Since \(b<N-r-1<N-\ell\), the definition
\eqref{eq:pw-boundary-coefficients} and the reflection symmetry
\eqref{eq:pw-boundary-symmetry} give
\begin{equation}
 \frac{\eta_{\ell,a}-\eta_{\ell,R(b)}}N
 =-\frac{\eta_{\ell,b}-\eta_{\ell,R(a)}}N
 =\frac{\mathbb I_{\{a<\ell\le b\}}}{N}.
 \label{eq:pw-paired-demand}
\end{equation}
By \eqref{eq:pw-forbidden-pair} and \eqref{eq:pw-paired-demand},
the edges pairs with nonzero coefficients are indexed
by the vertex pairs in
\begin{equation}
 \Set{Q}_\ell
 =\{(a,b):0\le a<\ell\le b,\ a+b<N-r-1\}.
 \label{eq:pw-prefix-correction-pairs}
\end{equation}
Each \((a,b)\in\Set{Q}_\ell\) corresponds to the two original
edges \((a,R(b))\) and \((b,R(a))\).
Their coefficients in \eqref{eq:pw-pair-identity} are \(1/N\)
and \(-1/N\), respectively, by \eqref{eq:pw-paired-demand}.
We rewrite their combined contribution as
\begin{equation}
 \begin{split}
 &\frac1N\bigl[(p_a-p_{R(b)})-(p_b-p_{R(a)})\bigr]=\frac1N\bigl[(p_a-p_b)+(p_{R(a)}-p_{R(b)})\bigr].
 \end{split}
 \label{eq:pw-mirror-identity}
\end{equation}
Summing \eqref{eq:pw-mirror-identity} over \(\Set{Q}_\ell\), while retaining the allowed terms in \eqref{eq:pw-pair-identity}, we obtain
\begin{equation}
 \begin{split}
 N\sum_{i=0}^{N-1}\eta_{\ell,i}p_i
 ={}&\sum_{k=1}^{r}\sum_{i=0}^{N-k-1}
       (\eta_{\ell,i}-\eta_{\ell,i+k})(p_i-p_{i+k})\\
 &+\sum_{(a,b)\in\Set{Q}_\ell}
       \bigl[(p_a-p_b)+(p_{R(a)}-p_{R(b)})\bigr].
 \end{split}
 \label{eq:pw-prefix-folded-expansion}
\end{equation}
The first sum already uses only allowed edges.
The second sum gives the contributions of the prohibited edges which will later be expressed using allowed edges. 

For each \((a,b)\in\Set{Q}_\ell\), we may now choose an increasing path \(\Set{P}_{ab}\) with allowed steps. Its vertex sequence satisfies
\begin{equation}
 \begin{gathered}
 \Set{P}_{ab}=(v_0,\ldots,v_m),\qquad
 a=v_0<\cdots<v_m=b,\\
 1\le v_{j+1}-v_j\le r\quad(0\le j<m).
 \end{gathered}
 \label{eq:pw-correction-path-choice}
\end{equation}
Such paths exist because one can always consider paths with all edges having length $1$. Fix the paths independently of \(\mathbf p\). For every choice satisfying \eqref{eq:pw-correction-path-choice},
the two differences in \eqref{eq:pw-mirror-identity} can be decomposed as
\begin{subequations}\label{eq:pw-path-telescoping}
 \begin{align}
 p_a-p_b&=\sum_{j=0}^{m-1}(p_{v_j}-p_{v_{j+1}}),\label{eq:pw-path-telescoping1}\\
 p_{R(a)}-p_{R(b)}
 &=-\sum_{j=0}^{m-1}
       (p_{R(v_{j+1})}-p_{R(v_j)}).\label{eq:pw-path-telescoping2}
 \end{align}
\end{subequations}
Every edge on the right hand side of \eqref{eq:pw-path-telescoping} has length at most \(r\), including the reflected edges. Substitution into \eqref{eq:pw-prefix-folded-expansion} therefore removes all forbidden differences.

\subsubsection{{Updating Coefficients on Allowed Edges}}\label{subsubsec:tit-allowed-coefficients}
Although all edges with length exceeding $r$ can be replaced with allowed edges using the technique \eqref{eq:pw-path-telescoping}, many edges may be reused, and hence the coefficients may not meet the budget. In light of this, for the paths chosen in \eqref{eq:pw-correction-path-choice}, we define the reused time of each edge $(i,i+k)$ as
\begin{equation}
 c_{i,k}^{(\ell)}
 =\bigl|\{\Set{P}_{ab}:(a,b)\in\Set{Q}_\ell,\ 
       \Set{P}_{ab}\text{ contains }(i,i+k)\}\bigr|.
 \label{eq:pw-correction-count-definition}
\end{equation}
Now, note that the condition in \eqref{eq:pw-prefix-correction-pairs} implies
\begin{equation}
 0\le a<\ell\le b<N-r-1,\qquad (a,b)\in\Set{Q}_\ell.
 \label{eq:pw-correction-endpoints}
\end{equation}
Since the vertex indices in the paths are increasing, by \eqref{eq:pw-correction-endpoints}, $c_{i,k}^{(\ell)}$ vanish outside \(0\le i<i+k<N-r-1\). Summing \eqref{eq:pw-path-telescoping1} over the chosen paths gives
\begin{equation}
 \sum_{(a,b)\in\Set{Q}_\ell}(p_a-p_b)
 =\sum_{k=1}^{r}\sum_{i=0}^{N-k-1}
        c_{i,k}^{(\ell)}(p_i-p_{i+k}).
 \label{eq:pw-correction-expansion}
\end{equation}
Reflection sends the edge \((i,i+k)\) to \((N-1-i-k,N-1-i)\), and \eqref{eq:pw-path-telescoping2} gives
\begin{equation}
 \sum_{(a,b)\in\Set{Q}_\ell}(p_{R(a)}-p_{R(b)})
 =-\sum_{k=1}^{r}\sum_{i=0}^{N-k-1}
       c_{N-1-i-k,k}^{(\ell)}(p_i-p_{i+k}).
 \label{eq:pw-reflected-correction-expansion}
\end{equation}
If \(\ell\ge N-r-1\), then \(\Set{Q}_\ell\) is empty by \eqref{eq:pw-prefix-correction-pairs}, and all the reused times are zero, thus \eqref{eq:pw-correction-expansion} and \eqref{eq:pw-reflected-correction-expansion} still hold. Substituting \eqref{eq:pw-correction-expansion} and
\eqref{eq:pw-reflected-correction-expansion} into
\eqref{eq:pw-prefix-folded-expansion} proves
\eqref{eq:pw-expansion} with
\begin{equation}
 f_{i,k}^{(\ell)}
 =\frac{\eta_{\ell,i}-\eta_{\ell,i+k}
          +c_{i,k}^{(\ell)}-c_{N-1-i-k,k}^{(\ell)}}{N}.
 \label{eq:pw-unified-prefix-coefficients}
\end{equation}
The difference $\eta_{\ell,i}-\eta_{\ell,i+k}$ is the original contribution on the allowed edge. The term $c_{i,k}^{(\ell)}$ adds the contributions of the chosen paths, and the last term $c_{N-1-i-k,k}^{(\ell)}$ subtracts those of their reflections. Thus \eqref{eq:pw-unified-prefix-coefficients} satisfies the expansion identity for every path choice in \eqref{eq:pw-correction-path-choice}. 

\subsubsection{{Constructing Path Arrangements}}\label{subsubsec:tit-path-arrangements}
It now remains to choose paths for which the budgets \eqref{eq:budgets_all} also hold. The nonnegativity of $c_{i,k}^{(\ell)}$ and the bound \eqref{eq:pw-boundary-symmetry} imply
\begin{equation}
 \begin{split}
 |f_{i,k}^{(\ell)}|
 &\le\frac{1+\bigl|c_{i,k}^{(\ell)}
                        -c_{N-1-i-k,k}^{(\ell)}\bigr|}{N}\le\frac{1+\max\{c_{i,k}^{(\ell)},
                         c_{N-1-i-k,k}^{(\ell)}\}}{N}.
 \end{split}
 \label{eq:pw-path-count-reduction}
\end{equation}
Consequently, bounds on \(c_{i,k}^{(\ell)}\) give bounds on the
coefficients in \eqref{eq:pw-unified-prefix-coefficients}.
The following lemma chooses paths directly for the endpoint set
\(\Set{Q}_\ell\) and bounds these counts.

\begin{lemma}[Path arrangement with bounded edge congestion]\label{lem:pw-interval}
Fix integers \(N,r,\ell\) satisfying \(1\le\ell\le r<N\), and let \(\Set{Q}_\ell\) be the set in \eqref{eq:pw-prefix-correction-pairs}. There exists a path arrangement \(\{\Set{P}_{ab}:(a,b)\in\Set{Q}_\ell\}\), in which each path joins \(a\) to \(b\) with increasing vertex indices and edge lengths in \(\{1,\ldots,r\}\), such that the counts \(c_{i,k}^{(\ell)}\) defined in \eqref{eq:pw-correction-count-definition} satisfy
\begin{equation}
 0\le c_{i,k}^{(\ell)}\le
 \begin{cases}
 \max\{1,N-2r-1\},&k=\ell,\\
 1,&k\ne\ell,
 \end{cases}
 \label{eq:pw-correction-count-bound}
\end{equation}
for every permissible edge \((i,i+k)\) with \(1\le k\le r\) and \(i+k<N\). We have \(c_{i,k}^{(\ell)}=0\) whenever \(\ell\ge N-r-1\) or \(i+k\ge N-r-1\). Thus each edge of length different from \(\ell\) is used by at most one chosen path, while each edge of length \(\ell\) is used by at most \(\max\{1,N-2r-1\}\) chosen paths.
\end{lemma}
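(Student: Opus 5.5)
The plan is to give an explicit \emph{greedy arrangement}: every path climbs from \(a\) in steps of length exactly \(\ell\) and then closes with a single final step into \(b\). The counting then reduces to two separate facts. Edges of length different from \(\ell\) occur only as final steps, and final steps are injective in \((a,b)\). Edges of length \(\ell\) are overloaded only by a controlled number of paths.

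\emph{Construction.} Write \(M=N-r-1\). If \(\ell\ge M\), then \(\Set{Q}_\ell=\varnothing\) and all counts vanish, so assume \(\ell<M\). Fix \((a,b)\in\Set{Q}_\ell\). Let \(j\ge0\) be the smallest integer with \(b-(a+j\ell)\le r\). Define
\[
\Set{P}_{ab}=(a,\,a+\ell,\,\ldots,\,a+j\ell,\,b).
\]
Minimality of \(j\) gives \(b-(a+(j-1)\ell)>r\) whenever \(j\ge1\). Hence the final step has length in \((r-\ell,r]\) when \(j\ge 1\), and length \(b-a\ge \ell-a\ge1\) when \(j=0\). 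Every step has length in \(\{1,\ldots,r\}\) and the vertices increase, so \eqref{eq:pw-correction-path-choice} holds. All vertices lie in \([a,b]\). By \eqref{eq:pw-correction-endpoints} we have \(b<M\), which yields \(c_{i,k}^{(\ell)}=0\) whenever \(i+k\ge N-r-1\).

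\emph{Final steps are used at most once.} Suppose the final step of \(\Set{P}_{ab}\) is the edge \((v,b)\).
\begin{itemize}
\item If \(j=0\), then \(v=a<\ell\).
\item If \(j\ge1\), then \(v=a+j\ell\) with \(0\le a<\ell\).
\end{itemize}
In both cases \(a=v\bmod \ell\), and \(b\) is the endpoint of the edge. So the edge \((v,b)\) determines the pair \((a,b)\). Each edge is therefore the final step of at most one path. An edge of length \(k\ne\ell\) can only be a final step, which gives \(c_{i,k}^{(\ell)}\le1\) for \(k\ne\ell\).

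\emph{Congestion on edges of length \(\ell\).} Such an edge \((i,i+\ell)\) is used either as a final step, at most once by the previous paragraph, or as a climbing step. Suppose \((i,i+\ell)\) is a climbing step of \(\Set{P}_{ab}\).
\begin{itemize}
\item Then \(i\equiv a \pmod\ell\), so \(a\) is determined by \(i\).
\item The path had not stopped at \(i\), so \(b-i>r\).
\item Together with \(a+b<M\), this gives \(i+r<b<M-a\).
\end{itemize}
Consequently at most \(M-a-i-r-1\le N-2r-2\) values of \(b\) are possible (and none if this is \(\le0\)). Adding the possible single use as a final step yields
\[
c_{i,\ell}^{(\ell)}\le \max\{1,\,N-2r-1\},
\]
which is \eqref{eq:pw-correction-count-bound}.

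The main obstacle is choosing a routing that does not overload short edges. Naive unit-step paths would stack \(\Theta(N)\) paths on some unit edges, which would break the \(k\ne\ell\) bound. The key choices are the greedy length-\(\ell\) climb, which forces all heavy sharing onto the single length \(\ell\), and the observation \(a=v\bmod\ell\), which makes final steps injective. The care needed is in the two edge cases: \(j=0\), and a final step that happens to have length exactly \(\ell\). Both are absorbed by the extra \(+1\) in the bound.
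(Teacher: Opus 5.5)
Your proposal is correct and follows essentially the same route as the paper: your greedy choice of the minimal \(j\) yields exactly the paper's last vertex \(v_{ab}\in\{b-r,\ldots,b-r+\ell-1\}\) with \(v_{ab}\equiv a\pmod\ell\), including the direct path when \(b\le r\). Your counting is the paper's argument as well: the last edge recovers \(a\) as its initial vertex modulo \(\ell\), so it is used at most once, and each length-\(\ell\) climbing edge fixes \(a\) and leaves at most \((N-2r-2)_+\) admissible values of \(b\).
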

\begin{IEEEproof} {Please refer to Appendix~\ref{app:tit-path-arrangement}.} \end{IEEEproof}

{\color{black}
Figure~\ref{fig:pw-lemma4-path-arrangement} shows all paths for a concrete \(\Set{Q}_\ell\) and their edge reuse times.
}
\begin{figure}[!ht]
\centering
\color{black}
\begin{tikzpicture}[x=1cm,y=1cm,
  font=\fontsize{12}{14}\selectfont,text=black,
  every path/.style={draw=black,line width=0.85pt}]
  \node[anchor=west] at (-0.12,1.90)
    {(a) All paths for \(N=8,\ r=\ell=2\)};
  \node[anchor=west] at (-0.12,1.40)
    {\(\mathcal Q_2=\{(0,2),(0,3),(0,4),(1,2),(1,3)\}\)};

  \foreach \column/\row/\a/\b in
    {0/0/0/2,1/0/0/3,2/0/0/4,0/1/1/2,1/1/1/3} {
    \begin{scope}[shift={(\column*4.15,-\row*1.75)}]
      \node at (1.26,0.91) {\(\mathcal P_{\a\b}\)};
      \ifnum\a=0
        \draw (0,0) .. controls (0,0.70) and (1.26,0.70) .. (1.26,0);
        \ifnum\b=3
          \draw (1.26,0) -- (1.89,0);
        \fi
        \ifnum\b=4
          \draw (1.26,0) .. controls (1.26,0.70) and (2.52,0.70) .. (2.52,0);
        \fi
      \else
        \ifnum\b=2
          \draw (0.63,0) -- (1.26,0);
        \else
          \draw (0.63,0) .. controls (0.63,0.70) and (1.89,0.70) .. (1.89,0);
        \fi
      \fi
      \foreach \i in {0,...,4} {
        \filldraw[fill=white,draw=black] (\i*0.63,0) circle (2.2pt);
        \node[below=4pt] at (\i*0.63,0) {\(\i\)};
      }
    \end{scope}
  }
  \node[align=center] at (9.56,-1.40)
    {\((1,4)\notin\mathcal Q_2\)\\[3pt]
     \(1+4=5\not<5\)};

  \node[anchor=west] at (-0.12,-2.63)
    {(b) Edge reuse times \(c_{i,k}^{(2)}\) for all five paths};
  \begin{scope}[shift={(1.40,-3.98)},font=\fontsize{12}{14}\selectfont]
    \foreach \i in {0,...,4} {
      \coordinate (v\i) at (\i*2,0);
    }
    \draw (v0) .. controls (0,-1.18) and (4,-1.18) .. (v2)
      node[pos=0.5,fill=white,inner sep=2pt] {\(\mathbf{3}\)};
    \draw (v2) .. controls (4,-1.18) and (8,-1.18) .. (v4)
      node[pos=0.5,fill=white,inner sep=2pt] {\(\mathbf{1}\)};
    \draw (v1) .. controls (2,1.18) and (6,1.18) .. (v3)
      node[pos=0.5,fill=white,inner sep=2pt] {\(\mathbf{1}\)};
    \foreach \i/\j in {1/2,2/3} {
      \draw (v\i) -- (v\j)
        node[midway,fill=white,inner sep=2pt] {\(\mathbf{1}\)};
    }
    \foreach \i in {0,...,4} {
      \filldraw[fill=white,draw=black] (v\i) circle (2.2pt);
    }
    \foreach \i in {0,2,4} {
      \node[above=5pt] at (v\i) {\(\i\)};
    }
    \foreach \i in {1,3} {
      \node[below=5pt] at (v\i) {\(\i\)};
    }
  \end{scope}
  \node at (5.40,-5.43)
    {\(c_{i,1}^{(2)}\le1
       \qquad c_{i,2}^{(2)}\le\max\{1,8-2\cdot2-1\}=3\)};
\end{tikzpicture}
\caption{The complete path arrangement for
\(\Set{Q}_2\) when \(N=8\) and \(r=\ell=2\).
Panel (a) shows one path for every endpoint pair, read from left to right.
Panel (b) labels each used edge \((i,i+k)\) by its count
\(c_{i,k}^{(2)}\). The edge \((0,2)\) is used three times,
attaining the bound in \eqref{eq:pw-correction-count-bound},
and every other drawn edge is used once.}
\label{fig:pw-lemma4-path-arrangement}
\end{figure}

Now, we choose the paths supplied by Lemma~\ref{lem:pw-interval}.
The bound \eqref{eq:pw-correction-count-bound}, together with
the vanishing conditions in the lemma, applies to both counts
in \eqref{eq:pw-path-count-reduction}.
Using \(\max\{1,N-2r-1\}=1+(N-2r-2)_+\), we obtain
\begin{equation}
 |f_{i,k}^{(\ell)}|\le t_{\ell k}:=
 \frac1N\left[
 1+\mathbb I_{\{\ell<N-r-1\}}\mathbb I_{\{k<N-r-1\}}
   +(N-2r-2)_+\mathbb I_{\{\ell=k\}}
 \right].
 \label{eq:pw-unified-prefix-table}
\end{equation}
The matrix \(\mathbf T=(t_{\ell k})\) given by \eqref{eq:pw-unified-prefix-table} is nonnegative and symmetric. For every row, summing the entries in \eqref{eq:pw-unified-prefix-table} gives
\begin{equation}
 \begin{split}
 \sum_{k=1}^{r}t_{\ell k}
 &\le\frac{r+\min\{r,(N-r-2)_+\}+(N-2r-2)_+}{N}\\
 &=\frac{r+(N-r-2)_+}{N}
 =\frac{\max\{r,N-2\}}{N}<1.
 \end{split}
 \label{eq:pw-unified-prefix-budgets}
\end{equation}
Symmetry gives the same bound for every column. For \(r=N-1\), the same bound is \eqref{eq:pw-full-prefix-budgets}. Thus the chosen paths give coefficients \eqref{eq:pw-unified-prefix-coefficients} satisfying both the expansion \eqref{eq:pw-expansion} and the budgets \eqref{eq:budgets_all}.

By Lemma~\ref{lem:pw-contraction} and the reduction to
\eqref{boundary_estimate}, the gap
\eqref{eq:pw-deficit-definition} is therefore nonnegative,
which gives the prefix bound
\begin{equation}
 S_{\{1,\ldots,r\}}(\mathbf U)
 \le\frac1N\sum_{k=1}^{r}(N-k)^2,
 \qquad 1\le r<N,
 \label{eq:pw-prefix-bound}
\end{equation}
for every unitary \(\mathbf U\).

\subsection{{All Prefix Minimizers}}\label{subsec:pw-prefix-equality}

\begingroup
Equation~\eqref{eq:tit-idft-benchmark} shows that the normalized \ac{IDFT} attains the prefix bound \eqref{eq:pw-prefix-bound}. We now determine every unitary matrix attaining that bound.
\begin{proposition}[Necessity for prefix equality]\label{prop:unique}
Let \(1\le r<N\), and let \(\mathbf U\) be unitary. If equality holds in \eqref{eq:pw-prefix-bound}, then \(\mathbf U\) has the form \eqref{eq:tit-optimal-modulation-family}.
\end{proposition}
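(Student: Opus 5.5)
The plan is to trace the chain of inequalities that produced \eqref{eq:pw-prefix-bound} and show that equality forces each link to be tight. Because the contraction constant is strictly less than one, tightness will force every column to have flat magnitudes. After that, only the phases remain to be pinned down, and the lag-one correlation together with column orthogonality determines them.

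\textbf{Step 1 (collapse of the magnitude terms).} Fix \(\mathcal K=\{1,\ldots,r\}\) and suppose \(B_{\mathcal K}(\mathbf U)=0\). By \eqref{eq:pw-deficit-decomposition}, \(N^2B_{\mathcal K}(\mathbf U)\) is the sum of \(\frac14\sum_{n,k}[E_k^{(n)}H_k^{(n)}-(b_k^{(n)})^2]\) and the phase loss, which is nonnegative by \eqref{eq:pw-phase-relaxation}. Applying Lemma~\ref{lem:pw-contraction} column by column, with the budgets from \eqref{eq:pw-unified-prefix-budgets} or \eqref{eq:pw-full-prefix-budgets}, gives \(\rho=\max\{r,N-2\}/N<1\). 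Combining this with \eqref{eq:pw-variation-bound} yields
\[
\sum_{n,k}(b_k^{(n)})^2\le\rho^2\sum_{n,k}(V_k^{(n)})^2\le\rho^2\sum_{n,k}E_k^{(n)}H_k^{(n)} .
\]
So the first sum is at least \((1-\rho^2)\frac14\sum_{n,k}E_k^{(n)}H_k^{(n)}\ge0\). Since both pieces of the decomposition are nonnegative and add to zero, each vanishes. Because \(1-\rho^2>0\), this gives \(\sum_{n,k}E_k^{(n)}H_k^{(n)}=0\), and hence \(E_k^{(n)}H_k^{(n)}=0\) for every \(n\) and every \(k\in\mathcal K\).

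\textbf{Step 2 (flat magnitudes).} Every prefix contains \(k=1\). The quantity \(H_1^{(n)}=\sum_{i=0}^{N-2}(x_i^{(n)}+x_{i+1}^{(n)})^2\) is strictly positive. The reason is that \(\sum_i (x_i^{(n)})^2=N\) forces some \(x_i^{(n)}>0\), and for \(N\ge2\) every index belongs to at least one consecutive pair. Hence \(E_1^{(n)}=0\), so \(x^{(n)}\) is constant along the connected path \(0,1,\ldots,N-1\). The normalization \eqref{eq:pw-profile-normalization} then gives \(x_i^{(n)}=1\), that is, \(|u_{i,n}|=N^{-1/2}\) for all entries.

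\textbf{Step 3 (linear phases).} The phase loss also vanishes, and in particular its \(k=1\) term does: \(N|\gamma_1(\mathbf u_n)|=\gamma_1(\mathbf x^{(n)})\). Since all entries are nonzero, equality in the triangle inequality forces the products \(\overline{u_{i,n}}u_{i+1,n}\) to share a common argument \(\phi_n\) for \(0\le i<N-1\). Therefore
\[
u_{m,n}=N^{-1/2}e^{\mathrm i(\alpha_n+m\phi_n)}
\]
for some real \(\alpha_n\).

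\textbf{Step 4 (orthogonality fixes the frequencies).} For \(n\ne n'\), orthogonality reads \(\frac1N\sum_{m=0}^{N-1}e^{\mathrm i m(\phi_n-\phi_{n'})}=0\). This geometric sum vanishes exactly when \(e^{\mathrm i(\phi_n-\phi_{n'})}\) is a nontrivial \(N\)-th root of unity, that is, when \(\phi_n-\phi_{n'}\in\frac{2\pi}{N}\mathbb Z\setminus N\cdot\frac{2\pi}{N}\mathbb Z\). Set \(\theta=\phi_0\). Then every \(\phi_n\) is congruent modulo \(2\pi\) to \(\theta+2\pi\sigma(n)/N\) with \(\sigma(n)\in\{0,\ldots,N-1\}\). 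Pairwise distinctness of the classes makes \(\sigma\) injective, hence a permutation, which is exactly \eqref{eq:tit-optimal-modulation-family}.

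The main point of care is Step 1. The argument needs the strict inequality \(\rho<1\) uniformly over all prefixes, including \(r=N-1\) where \(\rho=(N-1)/N\). It also needs the inequalities to be combined in summed form over \(n\) and \(k\) rather than per term, since individual summands \(E_kH_k-b_k^2\) may be negative. The remaining steps are elementary.
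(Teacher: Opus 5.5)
Your proposal is correct and follows the paper's proof essentially step for step. The contraction with $\rho=\max\{r,N-2\}/N<1$ forces $E_1^{(n)}=0$ and hence flat magnitudes, equality in the triangle inequality at lag one gives geometric columns, and column orthogonality turns the frequencies into $\theta$ plus a permutation of the multiples of $2\pi/N$ (you sum the contraction bound over columns before concluding, where the paper applies it column by column, and either order works).
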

\par\endgroup

\begin{IEEEproof} {Please refer to Appendix~\ref{app:tit-prefix-equality}.} \end{IEEEproof}

\begingroup
\section{Discussion}\label{sec:tit-discussion}

Theorem~\ref{cor:pw-eisl} and Corollary~\ref{cor:tit-simultaneous-prefix} connect the selected delay window to the attainable sidelobe performance and the available modulation choices. In this section, we discuss some consequences of these results.

\subsection{When Simultaneous Optimality of All \texorpdfstring{\acp{ESL}}{ESLs} Is Achievable in a Prefix}\label{subsec:tit-discussion-simultaneous}

Corollary~\ref{cor:tit-simultaneous-prefix} determines whether a common modulation can attain the individual \ac{ESL} minima throughout a prefix. For example, a block length of \(N=12\) satisfies
\begin{equation}
 1\mid12,\qquad 2\mid12,\qquad 3\mid12,\qquad 4\mid12,\qquad 5\nmid12.
 \label{eq:tit-discussion-divisors}
\end{equation}
Consequently, \ac{OFDM} simultaneously minimizes the \ac{ESL} at lags \(1,2,3,4\). As in \eqref{eq:tit-weighted-esl-implication}, multiplying the individual inequalities by nonnegative weights and summing gives, for every unitary \(\mathbf U\),
\begin{equation}
 \sum_{k=1}^{4}a_k\operatorname{\acs*{ESL}}_k(\mathbf V)
 \le\sum_{k=1}^{4}a_k\operatorname{\acs*{ESL}}_k(\mathbf U),
 \qquad a_1,a_2,a_3,a_4\ge0.
 \label{eq:tit-discussion-weighted-example}
\end{equation}
Extending the prefix to lag five changes the feasibility of the preferred objective. By \eqref{eq:tit-discussion-divisors} and Corollary~\ref{cor:tit-simultaneous-prefix}, no unitary modulation attains all five individual minima simultaneously. Theorem~\ref{cor:pw-eisl} nevertheless gives \ac{OFDM} as a global minimizer of their sum, thereby identifying the best total expected sidelobe energy for this larger window.

\begingroup
\subsection{Optimal Modulation for a Prescribed Lag}\label{subsec:tit-discussion-single-optimum}

Proposition~\ref{prop:pw-single-lag} specifies a globally optimal modulation for every individual lag. In the construction \eqref{eq:pw-block-competitor-matrix}, samples separated by \(k\) belong to the same chain, and each chain uses an \ac{IDFT} of its own length. The chain lengths determine the exact optimum through \eqref{eq:tit-single-correlation-optimum}. For \(k\mid N\), all chains have the same length and both this construction and the full \ac{IDFT} attain the minimum \ac{ESL}. For \(k\nmid N\), the chains have two different lengths, and \eqref{eq:tit-single-esl-gain} gives the strict reduction achieved by the construction.

For the block length \(N=12\) used in Section~\ref{subsec:tit-discussion-simultaneous}, take \(k=5\). Then \(q=2\), \(h=2\), and the five chain lengths are \(3,3,2,2,2\). An optimal matrix is

\begin{equation}
 \mathbf U_{\rm blk}=\mathbf P^{\mathrm T}
 \operatorname{diag}(\mathbf F_3^{\mathrm H},\mathbf F_3^{\mathrm H},
 \mathbf F_2^{\mathrm H},\mathbf F_2^{\mathrm H},\mathbf F_2^{\mathrm H}),
 \label{eq:tit-discussion-single-example}
\end{equation}

with \(\mathbf P\) defined by \eqref{eq:tit-chain-permutation}. Equations~\eqref{eq:tit-single-correlation-optimum} and \eqref{eq:tit-single-esl-gain} give

\begin{subequations}
\begin{align}
 S_{\{5\}}(\mathbf U_{\rm blk})&=2\frac{(3-1)^2}{3}+3\frac{(2-1)^2}{2}
 =\frac{25}{6},
 \label{eq:tit-discussion-single-energy}\\
 \operatorname{\acs*{ESL}}_5(\mathbf V)-\operatorname{\acs*{ESL}}_5(\mathbf U_{\rm blk})
 &=\frac{2-\mu_4}{12}.
 \label{eq:tit-discussion-single-gain}
\end{align}
\end{subequations}

This modulation attains the lowest possible \ac{ESL} at lag five. The prefix bounds determine the accompanying changes at earlier lags, as discussed next.
\par\endgroup

\subsection{The Cost of Improving an Individual Lag}\label{subsec:tit-discussion-tradeoff}

For a fixed unitary matrix \(\mathbf U\), define its change in \ac{ESL} relative to the \ac{IDFT} by
\begin{equation}
 \Delta_k:=\operatorname{\acs*{ESL}}_k(\mathbf U)
             -\operatorname{\acs*{ESL}}_k(\mathbf V),\qquad 1\le k<N.
 \label{eq:tit-discussion-esl-change}
\end{equation}
The definition \eqref{eq:tit-partial-eisl} and the prefix bound \eqref{eq:tit-optimal-partial-eisl} imply
\begin{equation}
 \sum_{j=1}^{r}\Delta_j=\mathcal E_r(\mathbf U)-\mathcal E_r(\mathbf V)\ge0,
 \qquad 1\le r<N.
 \label{eq:tit-discussion-cumulative-change}
\end{equation}
If \(\mathbf U\) improves the \ac{ESL} at a lag \(k\), so that \(\Delta_k<0\), then \eqref{eq:tit-discussion-cumulative-change} with \(r=k\) yields
\begin{align}
 \sum_{j=1}^{k-1}\Delta_j
 &=\sum_{j=1}^{k}\Delta_j-\Delta_k\nonumber\\
 &\ge-\Delta_k>0.
 \label{eq:tit-discussion-earlier-lag-cost}
\end{align}
\begingroup
Thus every improvement at an individual lag carries a net increase in the \ac{ESL} over the preceding lags, whose total is at least the size of that improvement. For the optimal matrix in \eqref{eq:tit-discussion-single-example}, the reduction at lag five is \((2-\mu_4)/12\) by \eqref{eq:tit-discussion-single-gain}, so \eqref{eq:tit-discussion-earlier-lag-cost} requires a total increase of at least \((2-\mu_4)/12\) over lags one through four. This locates the cost of the single lag improvement and explains how the construction coexists with optimality of every prefix sum.
\par\endgroup

\subsection{{Design Freedom within the Prefix Minimizing Family}}\label{subsec:tit-discussion-design-freedom}

{The complete prefix minimizing \ac{OFDM} family in \eqref{eq:tit-optimal-modulation-family}} leaves the subcarrier phases, subcarrier permutation, and a common linear phase adjustable. A subcarrier phase changes the phase assigned to each subcarrier, a subcarrier permutation changes the mapping of symbols, and the common linear phase multiplies sample \(m\) by \(e^{\mathrm i m\theta}\). For every matrix in this family, \eqref{eq:tit-family-single-energy}, \eqref{eq:pw-eisl-single-reduction}, and \eqref{eq:tit-optimal-partial-eisl} give
\begin{subequations}
\begin{align}
 \operatorname{\acs*{ESL}}_k(\mathbf U)&=\operatorname{\acs*{ESL}}_k(\mathbf V),
 \qquad 1\le k<N,
 \label{eq:tit-discussion-family-esl}\\
 \mathcal E_r(\mathbf U)&=\mathcal E_r(\mathbf V)
 =\min_{\widetilde{\mathbf U}^{\mathrm H}\widetilde{\mathbf U}=\mathbf I_N}
       \mathcal E_r(\widetilde{\mathbf U}),\qquad 1\le r<N.
 \label{eq:tit-discussion-family-eisl}
\end{align}
\end{subequations}
The minimizing family therefore has a common \ac{ESL} profile, and changing the prefix length preserves the full set of optimal modulations. These adjustable parameters allow additional design requirements to be studied within the family while preserving the common \ac{ESL} profile in \eqref{eq:tit-discussion-family-esl} and every prefix \ac{EISL} minimum in \eqref{eq:tit-discussion-family-eisl}. For example, one may investigate how the quantiles of the sidelobes, instead of their expectations, can be further controlled via certain permutations of the subcarriers.

\section{Conclusion}\label{sec:tit-conclusion}
\begingroup
For independent proper symbols drawn from sub-Gaussian constellations under the stated moment assumptions, we have determined the exact minimum \ac{ESL} at every single lag and an attaining unitary modulation based on the coordinate chains. Its improvement over the full \ac{IDFT} vanishes exactly when the lag divides the block length. Simultaneous minimization over \(\{1,\ldots,r\}\) is possible exactly when every lag in that prefix divides \(N\). For partial \ac{EISL}, \ac{OFDM} attains the global minimum for every prefix, and all prefix minimizers form the same family independently of the prefix length. The correlation energy gap decomposition supplies both the estimate on each chain and the boundary control needed for the prefix results.
\par\endgroup
\par\endgroup

\appendices
\begingroup
\section{Proof of Lemma~\ref{lem:tit-fourth-moment}}
\label{app:tit-fourth-moment}
\begin{IEEEproof}

Independence,
zero means, and properness in \eqref{eq:pw-symbol-moments} imply
\begin{equation}
 \mathbb E[\overline{s_i}s_js_p\overline{s_q}]
 =\delta_{ij}\delta_{pq}+\delta_{ip}\delta_{jq}
             +(\mu_4-2)\delta_{ij}\delta_{jp}\delta_{pq}.
 \label{eq:pw-fourth-index-kernel}
\end{equation}
To verify \eqref{eq:pw-fourth-index-kernel}, independence and the moments
\eqref{eq:pw-symbol-moments} give the exhaustive cases
\begin{equation}
 \mathbb E[\overline{s_i}s_js_p\overline{s_q}]
 =\begin{cases}
 0,&\text{some index occurs once},\\
 \mathbb E|s_i|^2\mathbb E|s_p|^2=1,&i=j\ne p=q,\\
 \mathbb E|s_i|^2\mathbb E|s_j|^2=1,&i=p\ne j=q,\\
 \overline{\mathbb E s_i^2}\mathbb E s_j^2=0,&i=q\ne j=p,\\
 \mathbb E|s_i|^4=\mu_4,&i=j=p=q.
 \end{cases}
 \label{eq:pw-fourth-index-cases}
\end{equation}
The Kronecker terms in \eqref{eq:pw-fourth-index-kernel} have exactly
these values. In the last case they sum to \(2+(\mu_4-2)=\mu_4\).
All terms are integrable, since H\"older gives
\begin{equation}
 \mathbb E|\overline{s_i}s_js_p\overline{s_q}|
 \le\prod_{a\in(i,j,p,q)}(\mathbb E|s_a|^4)^{1/4}=\mu_4.
 \label{eq:pw-fourth-integrability}
\end{equation}
The product in \eqref{eq:pw-fourth-integrability} has four factors,
including repeated indices.  For any matrix \(\mathbf A\), finite
expansion and \eqref{eq:pw-fourth-index-kernel} give
{\begin{align}
 \mathbb E|\mathbf s^{\mathrm H}\mathbf A\mathbf s|^2
 &=\sum_{i,j,p,q}A_{ij}\overline{A_{pq}}
                     \mathbb E[\overline{s_i}s_js_p\overline{s_q}],
 \nonumber\\
 &=\sum_{i,p}A_{ii}\overline{A_{pp}}+
   \sum_{i,j}|A_{ij}|^2+(\mu_4-2)\sum_i|A_{ii}|^2,
 \nonumber\\
 &=|\operatorname{tr}\mathbf A|^2+\|\mathbf A\|_F^2
                         +(\mu_4-2)\sum_i|A_{ii}|^2.
 \label{eq:pw-quadratic-fourth-identity}
\end{align}}
Define the noncyclic shift and its unitary conjugate by
\begin{equation}
 [\mathbf J_k]_{ij}=\mathbb{I}_{\{j=i+k,\ 0\le i<N-k\}},\qquad
 \mathbf A=\mathbf U^{\mathrm H}\mathbf J_k\mathbf U.
 \label{eq:pw-shift-matrix}
\end{equation}
Unitarity {\eqref{eq:tit-unitary-modulation}} yields
{\begin{subequations}
\begin{align}
 \operatorname{tr}\mathbf A&=\operatorname{tr}\mathbf J_k=0,
 \label{eq:pw-shift-trace}\\
 \|\mathbf A\|_F^2
 &=\operatorname{tr}(\mathbf U^{\mathrm H}\mathbf J_k^{\mathrm H}
                         \mathbf J_k\mathbf U)=N-k,
 \label{eq:pw-shift-frobenius}\\
 A_{nn}&=\mathbf u_n^{\mathrm H}\mathbf J_k\mathbf u_n
 =\gamma_k(\mathbf u_n).
 \label{eq:pw-shift-diagonal}
\end{align}
\end{subequations}}

Since \(\mathbf y^{\mathrm H}\mathbf J_k\mathbf y=\gamma_k(\mathbf y)\),
substitution of \eqref{eq:pw-shift-trace}--\eqref{eq:pw-shift-diagonal}
into \eqref{eq:pw-quadratic-fourth-identity} gives
\eqref{eq:pw-eisl-single-reduction}. Summing over \(k=1,\ldots,r\)
gives \eqref{eq:pw-eisl-prefix-reduction}.

\end{IEEEproof}
\par\endgroup

\begingroup
\section{Proof of Theorem~\ref{cor:pw-eisl}}
\label{app:tit-main-theorem}
\begin{IEEEproof}
\begingroup
Equations~\eqref{eq:tit-single-gap-equivalence} and \eqref{eq:tit-prefix-gap-equivalence} translate the deterministic correlation bounds into expected sidelobe bounds, since \(2-\mu_4>0\). Applying the fourth moment identity \eqref{eq:pw-eisl-single-reduction} to Proposition~\ref{prop:pw-single-lag} gives the exact minimum \eqref{eq:tit-optimal-single-esl}. The matrix \eqref{eq:pw-block-competitor-matrix} attains it, and multiplying \eqref{eq:tit-single-correlation-gain} by \(2-\mu_4\) gives \eqref{eq:tit-single-esl-gain} and the divisibility criterion. The prefix bound \eqref{eq:pw-prefix-bound} and the \ac{IDFT} value \eqref{eq:tit-idft-benchmark} give \eqref{eq:tit-optimal-partial-eisl}, while Proposition~\ref{prop:unique} gives the necessary matrix form \eqref{eq:tit-optimal-modulation-family}. Conversely, a matrix of that form is unitary because it is obtained from \(\mathbf V\) by diagonal unitary factors and a column permutation. Equations~\eqref{eq:tit-family-correlations}--\eqref{eq:tit-family-single-energy} show that its correlation energy agrees with that of \(\mathbf V\) at every lag. Summing over \(k=1,\ldots,r\) and applying \eqref{eq:pw-eisl-prefix-reduction} proves attainment of the prefix minimum.
\par\endgroup
\end{IEEEproof}
\par\endgroup

\begingroup
\section{Proof of Corollary~\ref{cor:tit-simultaneous-prefix}}
\label{app:tit-simultaneous-prefix}
\begin{IEEEproof}
If \eqref{eq:tit-simultaneous-divisibility} holds, the single lag part of Theorem~\ref{cor:pw-eisl} shows that the \ac{IDFT} satisfies \eqref{eq:tit-simultaneous-prefix}. Conversely, a matrix satisfying \eqref{eq:tit-simultaneous-prefix} minimizes \(\operatorname{\acs*{ESL}}_1\), which equals \(\mathcal E_1\) by \eqref{eq:tit-partial-eisl}. The equality characterization in Theorem~\ref{cor:pw-eisl} with \(r=1\) therefore gives \eqref{eq:tit-optimal-modulation-family}. Equations~\eqref{eq:tit-family-correlations} and \eqref{eq:tit-family-single-energy} apply to this matrix. The fourth moment identity \eqref{eq:pw-eisl-single-reduction} then gives \(\operatorname{\acs*{ESL}}_k(\mathbf U_*)=\operatorname{\acs*{ESL}}_k(\mathbf V)\). Since \(\mathbf U_*\) minimizes each of these quantities for \(1\le k\le r\), Theorem~\ref{cor:pw-eisl} implies \eqref{eq:tit-simultaneous-divisibility}. For \(r=N-1\) and \(N\ge3\), the integer \(N-1\) does not divide \(N\), which proves the final assertion.
\end{IEEEproof}
\par\endgroup

\begingroup
\section{Proof of Lemma~\ref{lem:pw-gap}}
\label{app:tit-magnitude-identity}
\begin{IEEEproof}
Expanding the squares in \eqref{eq:pw-EH-definition} and using
\eqref{eq:pw-correlation-objective} and \eqref{eq:pw-A-definition}
gives
\begin{equation}
 E_k=A_k-2\gamma_k(\mathbf x),\qquad
 H_k=A_k+2\gamma_k(\mathbf x).
 \label{eq:pw-EH-expanded}
\end{equation}
Consequently, \eqref{eq:pw-EH-expanded} and
\(A_k=2(N-k)-b_k\) from \eqref{eq:pw-A-definition} imply
{\begin{align}
 E_kH_k
 &=A_k^2-4\gamma_k(\mathbf x)^2
 =[2(N-k)-b_k]^2-4\gamma_k(\mathbf x)^2,
 \nonumber\\
 E_kH_k-b_k^2
 &=4\left[(N-k)^2-\gamma_k(\mathbf x)^2-(N-k)b_k\right].
 \label{eq:pw-gap-cancellation}
\end{align}}
Rearranging \eqref{eq:pw-gap-cancellation} proves
\eqref{eq:pw-gap-identity}.
\end{IEEEproof}
\par\endgroup

\begingroup
\section{Proof of Lemma~\ref{lem:pw-contraction}}
\label{app:tit-contraction}
\begin{IEEEproof}
First combine \eqref{eq:pw-boundary-form} with the assumed expansion \eqref{eq:pw-expansion}:
{\begin{equation}
 |b_\ell|
 \le\sum_{k\in\mathcal K}\sum_{i=0}^{N-k-1}
          |f^{(\ell)}_{i,k}|\,|p_i-p_{i+k}|\le\sum_{k\in\mathcal K}t_{\ell k}V_k.
 \label{eq:pw-contraction-grouped}
\end{equation}}
Here the second inequality uses \eqref{eq:pw-coefficient-capacity} and
\eqref{eq:pw-V-definition}.  Weighted Cauchy--Schwarz and the row bound in
\eqref{eq:pw-coefficient-budgets} give
\begin{equation}
 b_\ell^2\le\left(\sum_k t_{\ell k}\right)
                         \sum_k t_{\ell k}V_k^2
 \le\rho\sum_k t_{\ell k}V_k^2.
 \label{eq:pw-row-cauchy}
\end{equation}
Sum \eqref{eq:pw-row-cauchy} over \(\ell\) and use the column bound in
\eqref{eq:pw-coefficient-budgets}:
\begin{align}
 \sum_\ell b_\ell^2
 &\le\rho\sum_k\left(\sum_\ell t_{\ell k}\right)V_k^2,
 \nonumber \\
 &\le\rho^2\sum_kV_k^2.
 \label{eq:pw-column-budget}
\end{align}
This completes the proof.
\end{IEEEproof}
\par\endgroup

\begingroup
\section{Proof of Lemma~\ref{lem:pw-balance-solvability}}
\label{app:tit-balance-solvability}
\begin{IEEEproof}
Let \(\Set{C}_1,\ldots,\Set{C}_s\) be the connected components of the undirected graph with incidence matrix \(\mathbf A_{\mathcal K}\). By \cite[Theorem~2.3 and Lemma~2.4]{bapat2010graphs},
\begin{equation}
 \operatorname{rank}(\mathbf A_{\mathcal K})=N-s,
 \label{eq:pw-solvability-rank}
\end{equation}
and \eqref{eq:balance_matrix} is solvable for all
\(\ell\in\mathcal K\) if and only if
\begin{equation}
 \sum_{i\in\Set{C}_a}\eta_{\ell,i}=0,
 \qquad a=1,\ldots,s,\ \ell\in\mathcal K.
 \label{eq:pw-solvability-component-balance}
\end{equation}
Since every edge length is divisible by \(d\), each component must lie in a single residue class modulo \(d\).  Since all \(d\) classes are nonempty, we have \(s\ge d\).

For necessity, let us assume that \eqref{eq:pw-solvability-component-balance} holds.  For each \(\ell\in\mathcal K\), a component \(\Set{C}_a\) is a union of complete chains with step \(\ell\).  Each chain contributes one vertex to each boundary interval in \eqref{eq:pw-boundary-coefficients}, including a chain with only one vertex.  Let \(h_\ell(\Set{C}_a)\) be the number of these chains, then
\begin{equation}
 0=\sum_{i\in\Set{C}_a}\eta_{\ell,i}
 =2h_\ell(\Set{C}_a)-\frac{2\ell|\Set{C}_a|}{N}
 \quad\Longrightarrow\quad
 \frac{\ell|\Set{C}_a|}{N}\in\mathbb Z.
 \label{eq:pw-solvability-chain-integrality}
\end{equation}
Because \(d\) is the greatest common divisor of the allowed lags, it can be represented as an integer linear combination of these lags, and thus \eqref{eq:pw-solvability-chain-integrality} gives
$$
\frac{d|\Set{C}_a|}{N}=\sum_{\ell\in\Set{K}} \alpha_\ell \frac{\ell|\Set{C}_a|}{N}\in\mathbb Z_+,
$$  
for some integers $\{\alpha_{\ell}\}$. Hence
\begin{equation}
 |\Set{C}_a|\ge\frac Nd,\qquad
 N=\sum_{a=1}^{s}|\Set{C}_a|\ge\frac{sN}{d}
 \quad\Longrightarrow\quad s\le d.
 \label{eq:pw-solvability-component-count}
\end{equation}
Together with \(s\ge d\), \eqref{eq:pw-solvability-component-count}
forces \(s=d\) and \(|\Set{C}_a|=N/d\) for every \(a\).
Thus \(d\mid N\), and \eqref{eq:pw-solvability-rank} gives the
required rank.

Conversely, suppose \(d\mid N\) and the rank is \(N-d\). By \eqref{eq:pw-solvability-rank}, \(s=d\), so each component is an entire residue class modulo \(d\).  Such a class has \(N/d\) vertices. Furthermore, since \(d\mid\ell\), each class has exactly \(\ell/d\) vertices in each boundary interval.  Therefore
\begin{equation}
 \sum_{i\in\Set{C}_a}\eta_{\ell,i}
 =\frac{2\ell}{d}-\frac{2\ell}{N}\frac Nd=0,
 \qquad a=1,\ldots,d,\ \ell\in\mathcal K.
 \label{eq:pw-solvability-sufficient-balance}
\end{equation}
Equation~\eqref{eq:pw-solvability-sufficient-balance} verifies
the criterion \eqref{eq:pw-solvability-component-balance},
which proves sufficiency.
\end{IEEEproof}
\par\endgroup

\begingroup
\section{Proof of Proposition~\ref{prop:pw-single-lag}}
\label{app:tit-single-lag}
\begingroup
\begin{IEEEproof}

\textbf{An estimate for a vector on one chain.} For every \(m\ge1\) and \(\mathbf z\in\mathbb C^m\), we first establish

\begin{equation}
 |\gamma_1(\mathbf z)|^2\le\|\mathbf z\|_2^2
 \left[\left(1-\frac1{m^2}\right)\|\mathbf z\|_2^2
       -\left(1-\frac1m\right)(|z_0|^2+|z_{m-1}|^2)\right],
 \label{eq:tit-chain-vector-bound}
\end{equation}

where \(\gamma_1(\mathbf z)=\sum_{i=0}^{m-2}\overline{z_i}z_{i+1}\), with the empty sum equal to zero when \(m=1\). Both sides of \eqref{eq:tit-chain-vector-bound} vanish when \(m=1\) or \(\mathbf z=0\). Otherwise, normalize the magnitudes by

\begin{equation}
 x_i=\frac{\sqrt m\,|z_i|}{\|\mathbf z\|_2},\qquad p_i=x_i^2,
 \qquad \sum_{i=0}^{m-1}p_i=m.
 \label{eq:tit-chain-normalization}
\end{equation}

Use \(E_1,H_1,b_1,V_1\) from \eqref{eq:pw-EH-definition}, \eqref{eq:pw-A-definition}, and \eqref{eq:pw-V-definition} with ambient dimension \(m\). The boundary expression \eqref{eq:pw-boundary-form} and telescoping give

\begin{equation}
 b_1=p_0+p_{m-1}-2
 =\sum_{i=0}^{m-2}\left(1-\frac{2(i+1)}m\right)(p_i-p_{i+1}).
 \label{eq:tit-chain-boundary-expansion}
\end{equation}

The largest absolute coefficient in \eqref{eq:tit-chain-boundary-expansion} is \(1-2/m\). The triangle inequality and \eqref{eq:pw-variation-bound} therefore imply

\begin{equation}
 b_1^2\le\left(1-\frac2m\right)^2V_1^2\le E_1H_1.
 \label{eq:tit-chain-boundary-bound}
\end{equation}

Apply Lemma~\ref{lem:pw-gap} with \(N=m\) and \(k=1\). Equations~\eqref{eq:tit-chain-boundary-expansion} and \eqref{eq:tit-chain-boundary-bound} give

\begin{align}
 \gamma_1(\mathbf x)^2
 &\le(m-1)^2-(m-1)b_1\nonumber\\
 &=m^2-1-(m-1)(p_0+p_{m-1}).
 \label{eq:tit-chain-magnitude-bound}
\end{align}

The triangle inequality, \eqref{eq:tit-chain-normalization}, and \eqref{eq:tit-chain-magnitude-bound} now yield

\begin{align}
 |\gamma_1(\mathbf z)|^2
 &\le\frac{\|\mathbf z\|_2^4}{m^2}\gamma_1(\mathbf x)^2\nonumber\\
 &\le\|\mathbf z\|_2^2
 \left[\left(1-\frac1{m^2}\right)\|\mathbf z\|_2^2
       -\left(1-\frac1m\right)(|z_0|^2+|z_{m-1}|^2)\right].
 \label{eq:tit-chain-vector-verification}
\end{align}

This proves \eqref{eq:tit-chain-vector-bound}.

\textbf{An upper bound for an arbitrary unitary matrix.} For each column \(\mathbf u_n\) of a unitary \(\mathbf U\), denote its entries on chain \(\mathcal C_j\), in chain order, by

\begin{equation}
 \mathbf z_{j,n}=(u_{j,n},u_{j+k,n},\ldots,u_{j+(m_j-1)k,n})^{\mathrm T}
 \in\mathbb C^{m_j}.
 \label{eq:tit-chain-column-restriction}
\end{equation}

The chains partition the rows. Column and row normalization in \eqref{eq:tit-unitary-modulation} therefore give

\begin{subequations}
\begin{align}
 \sum_{j=0}^{k-1}\|\mathbf z_{j,n}\|_2^2&=1,
 \label{eq:tit-chain-column-norm}\\
 \sum_{n=0}^{N-1}\|\mathbf z_{j,n}\|_2^2&=m_j,
 \label{eq:tit-chain-power-sum}\\
 \sum_{n=0}^{N-1}\left(|u_{j,n}|^2+|u_{j+(m_j-1)k,n}|^2\right)&=2.
 \label{eq:tit-chain-endpoint-sum}
\end{align}
\end{subequations}

Equation~\eqref{eq:tit-chain-endpoint-sum} also holds when \(m_j=1\), in which case the same row is counted twice. Every sample pair at lag \(k\) is adjacent within one chain. Apply Cauchy--Schwarz to the chains on which \(\mathbf z_{j,n}\ne0\), and then use \eqref{eq:tit-chain-vector-bound} and \eqref{eq:tit-chain-column-norm}, to obtain

\begin{align}
 |\gamma_k(\mathbf u_n)|^2
 &=\left|\sum_{j=0}^{k-1}\gamma_1(\mathbf z_{j,n})\right|^2\nonumber\\
 &\le\left(\sum_{j=0}^{k-1}\|\mathbf z_{j,n}\|_2^2\right)
       \sum_{\substack{0\le j<k\\\|\mathbf z_{j,n}\|_2>0}}
       \frac{|\gamma_1(\mathbf z_{j,n})|^2}{\|\mathbf z_{j,n}\|_2^2}\nonumber\\
 &\le\sum_{j=0}^{k-1}\left[
       \left(1-\frac1{m_j^2}\right)\|\mathbf z_{j,n}\|_2^2
       -\left(1-\frac1{m_j}\right)
       (|u_{j,n}|^2+|u_{j+(m_j-1)k,n}|^2)\right].
 \label{eq:tit-chain-column-bound}
\end{align}

Terms with \(\mathbf z_{j,n}=0\) contribute zero to the final sum. Summing \eqref{eq:tit-chain-column-bound} over \(n\), and using \eqref{eq:tit-chain-power-sum} and \eqref{eq:tit-chain-endpoint-sum}, gives

\begin{align}
 S_{\{k\}}(\mathbf U)
 &\le\sum_{j=0}^{k-1}\left[
       \left(1-\frac1{m_j^2}\right)m_j-2\left(1-\frac1{m_j}\right)\right]\nonumber\\
 &=\sum_{j=0}^{k-1}\frac{(m_j-1)^2}{m_j}.
 \label{eq:tit-arbitrary-single-upper-bound}
\end{align}

\textbf{Attainment and comparison with the full \ac{IDFT}.} The matrix \(\mathbf U_{\rm blk}\) in \eqref{eq:pw-block-competitor-matrix} is unitary because \(\mathbf P\) and each block \(\mathbf F_{m_j}^{\mathrm H}\) are unitary. Every column is supported on one chain. For \(m_j\ge2\), \eqref{eq:tit-idft-correlation} with \(N=m_j\) and \(k=1\) gives a correlation magnitude of \((m_j-1)/m_j\) within that chain. A chain of length one contributes zero. Using \(\sum_{j=0}^{k-1}m_j=N\) and \eqref{eq:pw-chain-lengths}, we obtain

\begin{align}
 S_{\{k\}}(\mathbf U_{\rm blk})
 &=\sum_{j=0}^{k-1}m_j\left(\frac{m_j-1}{m_j}\right)^2\nonumber\\
 &=N-2k+\sum_{j=0}^{k-1}\frac1{m_j}\nonumber\\
 &=N-2k+\frac{h}{q+1}+\frac{k-h}{q}.
 \label{eq:pw-block-sum}
\end{align}

This attains \eqref{eq:tit-arbitrary-single-upper-bound} and proves \eqref{eq:tit-single-correlation-optimum}. Finally, subtracting \eqref{eq:tit-idft-benchmark} and using \(N=qk+h\) gives

\begin{align}
 S_{\{k\}}(\mathbf U_{\rm blk})-S_{\{k\}}(\mathbf V)
 &=\frac{h}{q+1}+\frac{k-h}{q}-\frac{k^2}{N}\nonumber\\
 &=\frac{h(k-h)}{Nq(q+1)}.
 \label{eq:tit-single-gain-calculation}
\end{align}

The right side is zero exactly when \(h=0\), and is strictly positive for \(0<h<k\). This proves \eqref{eq:tit-single-correlation-gain} and the stated divisibility criterion.

\end{IEEEproof}
\par\endgroup
\par\endgroup

\begingroup
\section{Proof of Lemma~\ref{lem:pw-interval}}
\label{app:tit-path-arrangement}
\begin{IEEEproof}
If \(\ell\ge N-r-1\), then \(\Set{Q}_\ell\) is empty by \eqref{eq:pw-prefix-correction-pairs}, so all counts are zero. Otherwise, construct a path for each \((a,b)\in\Set{Q}_\ell\) as follows. If \(b\le r\), use the direct path \(\Set{P}_{ab}=(a,b)\). If \(b>r\), choose the unique integer
\begin{equation}
 v_{ab}\in\{b-r,\ldots,b-r+\ell-1\},\qquad
 v_{ab}\equiv a\pmod\ell.
 \label{eq:pw-path-last-vertex}
\end{equation}
The interval in \eqref{eq:pw-path-last-vertex} contains \(\ell\) consecutive integers, so there is a unique choice. The congruence and \(0\le a<\ell\) give \(a\le v_{ab}<b\). Choose
\begin{equation}
 \Set{P}_{ab}=(a,a+\ell,\ldots,v_{ab},b),\qquad
 r-\ell+1\le b-v_{ab}\le r.
 \label{eq:pw-q-path-choice}
\end{equation}
When \(v_{ab}=a\), this is just the direct path, otherwise all steps before the last have length \(\ell\). Since \(\ell\le r\), every path in \eqref{eq:pw-q-path-choice} satisfies \eqref{eq:pw-correction-path-choice}.

Next we compute the edge reuse times. Fix an allowed edge \((i,i+k)\). If it is the last edge of a path, we have \(b=i+k\). When \(b\le r\), it is a direct edge and hence \(a=i\). When \(b>r\), \eqref{eq:pw-path-last-vertex} determines \(a\) as the unique integer in \(\{0,\ldots,\ell-1\}\) congruent to \(i\) modulo \(\ell\). Thus at most one chosen path uses this edge as its last step, namely
\begin{equation}
 \bigl|\{(a,b)\in\Set{Q}_\ell:
 (i,i+k)\text{ is the last edge of }\Set{P}_{ab}\}\bigr|\le1.
 \label{eq:pw-last-edge-count}
\end{equation}
Every edge used before the last step has length \(\ell\). If \((i,i+\ell)\) is such an edge, its initial vertex again determines the starting vertex \(a\) modulo \(\ell\). For this fixed \(a\), there is at most one path per terminal vertex \(b\), and $b$ satisfies \(r<b<N-r-1\) by \eqref{eq:pw-correction-endpoints} because every path with \(b\le r\) has only one edge. Writing \(x_+=\max\{x,0\}\), the number of valid $b$ is at most
\begin{equation}
 \bigl|\{b\in\mathbb Z:r<b<N-r-1\}\bigr|=(N-2r-2)_+.
 \label{eq:pw-initial-step-terminals}
\end{equation}
Combining \eqref{eq:pw-last-edge-count} and \eqref{eq:pw-initial-step-terminals} gives \(c_{i,\ell}^{(\ell)}\le1+(N-2r-2)_+ =\max\{1,N-2r-1\}\). For \(k\ne\ell\), the edge can only occur as a last step, so \eqref{eq:pw-last-edge-count} gives \(c_{i,k}^{(\ell)}\le1\). This proves \eqref{eq:pw-correction-count-bound}. 

Finally, by \eqref{eq:pw-correction-endpoints}, all the vertices in every permissible path have indices less than \(N-r-1\), which proves the stated vanishing of the edge reuse times.
\end{IEEEproof}
\par\endgroup

\begingroup
\section{Proof of Proposition~\ref{prop:unique}}
\label{app:tit-prefix-equality}
\begin{IEEEproof}
Assume equality in \eqref{eq:pw-prefix-bound}. By
\eqref{eq:pw-deficit-definition}, this means
\(B_{\{1,\ldots,r\}}(\mathbf U)=0\).
We determine the absolute values of the entries of \(\mathbf U\),
then their phases.

\par\medskip\noindent
\textbf{All entries of \(\mathbf U\) have absolute value \(N^{-1/2}\).}
For any nonnegative vector \(\mathbf x\) with
\(\sum_{i=0}^{N-1}x_i^2=N\), the row and column bounds
\eqref{eq:pw-unified-prefix-budgets} allow us to apply
Lemma~\ref{lem:pw-contraction} with
\(\rho_{N,r}=\max\{r,N-2\}/N<1\).
Its conclusion \eqref{eq:pw-boundary-contraction}, followed by
\eqref{eq:pw-variation-bound}, gives
\begin{equation}
 \begin{aligned}
 \sum_{k=1}^{r}(E_kH_k-b_k^2)
 &\ge\sum_{k=1}^{r}E_kH_k
       -\rho_{N,r}^{\,2}\sum_{k=1}^{r}V_k^2\\
 &\ge(1-\rho_{N,r}^{\,2})\sum_{k=1}^{r}E_kH_k\ge0.
 \end{aligned}
 \label{eq:pw-strict-profile}
\end{equation}
Apply \eqref{eq:pw-strict-profile} to each vector
\(\mathbf x^{(n)}\) defined in \eqref{eq:pw-normalized-profile}.
In the gap decomposition \eqref{eq:pw-deficit-decomposition},
the sum over \(k\) in the first term is nonnegative for each
column \(n\), by \eqref{eq:pw-strict-profile}. Every summand in
the second term is nonnegative by \eqref{eq:pw-phase-relaxation}.
Since \(B_{\{1,\ldots,r\}}(\mathbf U)=0\), these nonnegative
quantities must all be zero. Thus, for every \(0\le n<N\),
\begin{equation}
 \begin{gathered}
 \sum_{k=1}^{r}
 \left[E_k^{(n)}H_k^{(n)}-(b_k^{(n)})^2\right]=0,~\gamma_k(\mathbf x^{(n)})^2
 =N^2|\gamma_k(\mathbf u_n)|^2,~ 1\le k\le r.
 \end{gathered}
 \label{eq:pw-zero-gap-conditions}
\end{equation}
Because \(1-\rho_{N,r}^{\,2}>0\), equations
\eqref{eq:pw-strict-profile} and \eqref{eq:pw-zero-gap-conditions}
imply \(E_1^{(n)}H_1^{(n)}=0\).
On the other hand, \eqref{eq:pw-EH-definition} and \eqref{eq:pw-profile-normalization} give $ H_1^{(n)} =\sum_{i=0}^{N-2}(x_i^{(n)}+x_{i+1}^{(n)})^2>0$. It follows that
\begin{equation}
 E_1^{(n)}=\sum_{i=0}^{N-2}(x_i^{(n)}-x_{i+1}^{(n)})^2=0.
 \label{eq:pw-adjacent-magnitudes-equal}
\end{equation}
Equation~\eqref{eq:pw-adjacent-magnitudes-equal} makes all adjacent
entries of \(\mathbf x^{(n)}\) equal. Their common value is \(1\)
by \eqref{eq:pw-profile-normalization}. Therefore,
\begin{equation}
 x_i^{(n)}=1,\qquad |u_{i,n}|=N^{-1/2},
 \qquad 0\le i,n<N.
 \label{eq:pw-basis-equality-magnitudes}
\end{equation}

\par\medskip\noindent
\textbf{Within each column, successive entries have a constant phase difference.}
Taking \(k=1\) in \eqref{eq:pw-zero-gap-conditions} and using
\eqref{eq:pw-basis-equality-magnitudes}, we obtain
\begin{equation}
 \left|\sum_{q=0}^{N-2}\overline{u_{q,n}}u_{q+1,n}\right|
 =\frac{N-1}{N}
 =\sum_{q=0}^{N-2}|\overline{u_{q,n}}u_{q+1,n}|.
 \label{eq:pw-adjacent-triangle-equality}
\end{equation}
Equality in the triangle inequality in
\eqref{eq:pw-adjacent-triangle-equality} means that all products
\(\overline{u_{q,n}}u_{q+1,n}\) have the same argument.
Their absolute values are \(1/N\) by
\eqref{eq:pw-basis-equality-magnitudes}. Hence, for some real \(\theta_n\),
\begin{equation}
 \overline{u_{q,n}}u_{q+1,n}=N^{-1}e^{\mathrm i\theta_n}
\Longrightarrow
 u_{q+1,n}=e^{\mathrm i\theta_n}u_{q,n},\qquad
 0\le q<N-1.
 \label{eq:pw-successive-entry-ratio}
\end{equation}
Write \(u_{0,n}=N^{-1/2}e^{\mathrm i\alpha_n}\) with \(\alpha_n\in\mathbb R\). Repeatedly applying \eqref{eq:pw-successive-entry-ratio} gives
\begin{equation}
 u_{q,n}=N^{-1/2}e^{\mathrm i\alpha_n+\mathrm i q\theta_n},
 \qquad 0\le q,n<N.
 \label{eq:pw-geometric-columns}
\end{equation}

\par\medskip\noindent
\textbf{Orthogonality determines the possible values of \(\theta_n\).}
For distinct columns \(m\ne n\), equations
{\eqref{eq:tit-unitary-modulation}} and \eqref{eq:pw-geometric-columns} give
\begin{equation}
 0=\mathbf u_m^{\mathrm H}\mathbf u_n
 =\frac{e^{\mathrm i(\alpha_n-\alpha_m)}}N
   \sum_{q=0}^{N-1}e^{\mathrm i q(\theta_n-\theta_m)}.
 \label{eq:pw-column-orthogonality}
\end{equation}
If \(e^{\mathrm i(\theta_n-\theta_m)}=1\), the sum in \eqref{eq:pw-column-orthogonality} would be \(N\), causing a contradiction. The formula for a finite geometric sum therefore yields
\begin{equation}
 0=\frac{1-e^{\mathrm i N(\theta_n-\theta_m)}}
        {1-e^{\mathrm i(\theta_n-\theta_m)}}
 \quad\Longrightarrow\quad
 e^{\mathrm i N(\theta_n-\theta_m)}=1.
 \label{eq:pw-phase-difference-roots}
\end{equation}
Equation~\eqref{eq:pw-phase-difference-roots} shows that, modulo \(2\pi\), each \(\theta_n\) is one of \(\theta_0,\theta_0+2\pi/N,\ldots,\theta_0+2\pi(N-1)/N\). The values \(\theta_n\) are distinct modulo \(2\pi\), because equal values would contradict \eqref{eq:pw-column-orthogonality}. Thus each of these \(N\) possibilities occurs once, and there is a permutation \(\sigma\) of \(\{0,\ldots,N-1\}\) such that
\begin{equation}
 \theta_n\equiv\theta_0+\frac{2\pi\sigma(n)}N\pmod{2\pi}.
 \label{eq:pw-common-phase-values}
\end{equation}
\begingroup
Combining \eqref{eq:pw-geometric-columns} and \eqref{eq:pw-common-phase-values} gives
\begin{equation}
 u_{q,n}=\frac{e^{\mathrm i(\alpha_n+q\theta_0)}}{\sqrt N}
             e^{2\pi\mathrm i q\sigma(n)/N},\qquad 0\le q,n<N.
 \label{eq:tit-equality-matrix-form}
\end{equation}
Equation~\eqref{eq:tit-equality-matrix-form} is precisely \eqref{eq:tit-optimal-modulation-family}, with \(\theta=\theta_0\).
\par\endgroup

\end{IEEEproof}
\par\endgroup

\end{document}